\newcommand{\mrm}[1]{\mathrm{#1}} 
\newcommand{\V}[1]{\boldsymbol{#1}} 
\newcommand{\grad}{{\nabla}} 
\def\cref@override@label@type#1{}
\crefname{hypothesis}{Hypothesis}{Hypotheses}
\title{An $\fO(N)$ quasi-Ewald splitting method for nanoconfined electrostatics \thanks{Submitted to the editors DATE.
\funding{This work is supported by the Natural Science Foundation of China Grant Nos. 12201146 (Z. G.) and 12401667~(Y. L.); Natural Science Foundation of Guangdong (Grant No.2023A1515012197); and Guangzhou-HKUST(GZ) joint research project (Grant No.2023A03J0003 and 2024A03J0606).}
}}
\author{Zecheng Gan\thanks{Thrust of Advanced Materials and Guangzhou Municipal Key Laboratory of Materials Informatics, The Hong Kong University of Science and Technology (Guangzhou), Guangdong, China; and Department of Mathematics, The Hong Kong University of Science and Technology, Hong Kong SAR, China (\email{zechenggan@ust.hk}).}
\and Xuanzhao Gao\thanks{Corresponding author. Center for Computational Mathematics, Flatiron Institute, New York, NY, USA (\email{xgao@flatironinstitute.org}).}
\and Yuqing Li\thanks{School of Mathematical Sciences,  Key Laboratory of MEA \& Shanghai Key Laboratory of PMMP,  East China Normal University, Shanghai 200241, China (\email{liyq@math.ecnu.edu.cn})}
}
\begin{document}

\maketitle

\begin{abstract}
Simulating the dynamics of charged particles in quasi-two-dimensional (quasi-2D) nanoconfined systems presents a significant computational challenge due to the long-range nature of electrostatic interactions and the geometric anisotropy.
To address this, we introduce a novel \emph{quasi-Ewald splitting} strategy tailored for particle-based simulations in such geometry.
Our splitting strategy seamlessly integrates a collection of advanced numerical techniques, including optimal quadrature rules [L. N. Trefethen, $SIAM~Rev.$ 64(1)(2022), pp.132-150], fast pairwise kernel summation methods [S. Jiang and L. Greengard, $Commun.~ Comput.~Phys.$ 31(1)(2022), pp.1-26], and the random batch method with importance sampling in $k$-space [S. Jin, L. Li, Z. Xu et al., $SIAM~J.~Sci.~Comput.$ 43(4)(2021), pp.B937-B960]. 
The resulting algorithm achieves an $\fO(N)$ overall computational complexity, where $N$ denotes the total number of confined particles. 
Simulations of several prototype systems validate the accuracy and efficiency of our method. 
Furthermore, we present numerical observations specifically related to nanoconfined charged many-body systems, highlighting phenomena such as dielectric boundary effects, anisotropic diffusion, and the structure of the electrical double layer (EDL) under conditions of charge asymmetry.
\end{abstract}

\begin{keyword}
molecular dynamics, charged system, nanoconfinement, sharp dielectric interfaces, random batch Ewald

\end{keyword}



\section{Introduction}
\label{sec...intro}

In recent years, the modeling, simulation, and experimental investigation of quasi-two-dimensional (quasi-2D) systems have gained significant attention across various disciplines~\cite{benjamin1997molecular, chen2006pinned, mccann2006landau, mazars2011long, giovambattista2012computational, moreno2021quantum,yang2023physics,yu2023characterizing}. 
Typically, such  systems possess a nano-sized longitudinal thickness in the $z$ direction, achieved through confinement; they are bulk-like and modeled as doubly periodic in the transverse $xy$ directions, hence endowed with an inherent \emph{multi-scale} nature.
Among quasi-2D systems, charged many-body systems under dielectric nanoconfinement are of particular interest to us, such as graphene~\cite{novoselov2004electric}, metal dichalcogenide monolayers~\cite{kumar2012tunable}, semi-conductors and solid-state battery systems~\cite{li2023recent}.
A key observation in these systems is the \emph{dielectric confinement effect}~\cite{hong1992dielectric,katan2019quantum}, arising from the discontinuity in dielectric properties at the boundaries,  leading to distinct electrostatic screening behaviors.  
This effect plays a pivotal role in governing the properties of charged systems   and   has been implicated in a wide range of phenomena,  including ion transport in nanochannels~\cite{antila2018dielectric}, the structural organization of polymer brushes~\cite{yuan2020structure}, and the pattern formation observed in confined  dipolar systems~\cite{wang2019dielectric, gao2024broken}. 
Despite significant advancements in simulation techniques over the past few decades, achieving accurate and efficient representations of the dielectric confinement effect remains a formidable challenge~\cite{gao2024broken}. 
Moreover, the intrinsic multi-scale nature of quasi-2D systems, combined with the long-range interactions inherent to electrostatic forces,   presents profound difficulties for both theoretical modeling and numerical simulation approaches.

Linear-scaling algorithms have been extensively developed for the efficient simulation of isotropic, non-confined $N$-particle systems under periodic or free-space boundary conditions. 
These approaches primarily follow three strategies: (1) Fourier spectral methods, (2) tree-based methods, and (3) random batch sampling methods. 
Fourier spectral methods combine kernel splitting with the Fast Fourier Transform (FFT) to achieve $\mathcal{O}(N\log N)$ complexity, including the Particle-Mesh Ewald (PME)~\cite{darden1993particle} method and the u-series method~\cite{predescu2020u}.
Tree-based methods, such as the Fast Multipole Method (FMM)~\cite{greengard1987fast, greengard1988rapid,fmm2, fmm3, fmm4, fmm6, fmm7, fmm8} and the dual-space multilevel kernel-splitting (DMK) framework~\cite{jiang2025dual}, utilize hierarchical data structures to reach $\mathcal{O}(N)$ complexity. 
Random batch sampling methods, such as the random batch Ewald (RBE)~\cite{jin2021random} and random batch sum-of-Gaussians (RBSOG)~\cite{liang2023random}, leverage importance sampling in Fourier space to provide $\mathcal{O}(N)$ stochastic acceleration, which is particularly effective for capturing ensemble averages in molecular dynamics simulations~\cite{jin2020random, jin2021convergence, jin2022randombatch}. 
While these methods are robust for isotropic Coulomb systems, quasi-2D systems with confined geometries introduce unique challenges due to reduced symmetry, necessitating either significant modifications to existing algorithms or the development of new methodologies.

Various methods have been developed to reduce the $\mathcal{O}(N^2)$ complexity inherent in quasi-2D confined systems. 
For systems with a homogeneous dielectric background, early contributions include the Lekner summation-based MMM2D method~\cite{arnold2002mmm2d}, which achieves $\mathcal{O}(N^{5/3})$ complexity. 
More recent developments include the Ewald3DC method~\cite{yeh1999ewald}, the spectral Ewald method~\cite{lindbo2012fast}, periodic FMMs~\cite{yan2018flexibly, pei2023fast}, SoEwald2D~\cite{gan2024fast}, and the fast sum-of-Gaussians method~\cite{Gao2025}. 
These approaches typically leverage the Fast Fourier Transform (FFT), Fast Multipole Methods (FMM), or random batch sampling to reach $\mathcal{O}(N\log N)$ or $\mathcal{O}(N)$ complexity. 
However, the requirement of a homogeneous dielectric background limits their general applicability. 
To address sharp dielectric interfaces, one approach involves solving for the induced charge density or surface potential through various optimization techniques~\cite{boda2004computing, tyagi2010iterative, jadhao2012simulation, jadhao2013variational, Barros2014Efficient, Barros2014Dielectric}. 
Alternatively, the Image Charge Method (ICM) can be integrated with established electrostatic solvers~\cite{tyagi2007icmmm2d, tyagi2008electrostatic, dos2015electrolytes, yuan2021particle, liang2020harmonic, liang2022hsma, gan2025random}, and its error can be controlled accurately~\cite{gao2025accurate}.
Despite their potential, these methods still face significant challenges in strongly confined quasi-2D systems. 
For instance, ICM-based methods may require a prohibitive number of image charges to achieve convergence~\cite{gao2025accurate}, while boundary integral-based approaches often become ill-conditioned in quasi-2D geometries~\cite{Barros2014Efficient}, complicating the balance between computational accuracy and efficiency.

In this work, we propose an~$\mathcal{O}(N)$ simulation algorithm, namely the quasi-Ewald method (QEM), for doubly periodic charged particle systems with dielectric mismatches at the confinement boundaries.
To overcome the difficulties caused by the geometry of quasi-2D systems, we propose a tailored quasi-Ewald splitting strategy, which divides the original Coulomb kernel into short-range real space and long-range Fourier space components.
Through the Dirichlet-to-Neumann map, the Green's function for quasi-2D nanoconfined electrostatics is derived,  
one thus arrives at an analytic and fast convergent lattice summation formula for this class of systems.
We further develop a set of efficient and accurate methods to numerically evaluate these components.
We apply analytic reduction combined with a tailored Gauss quadrature rule~\cite{trefethen2022exactness} to numerically evaluate the Hankel transform of 0-th order~\cite{piessens2000hankel} in the short range components, and develop a \textit{separation via sorting} technique~\cite{jiang2021approximating, gan2024fast} to boost the computation of the pairwise summation in the long range components.
Finally, by introducing the \textit{random batch sampling}~\cite{jin2020random, jin2021random, jin2021randomquantum, golse2019random, jin2021convergence, liang2022superscalability, li2020random, jin2022mean} in $k$-space, we achieve linear complexity, making QEM suitable for molecular dynamics simulations of large-scale, nanoconfined quasi-2D systems.
Notably, QEM is \emph{mesh-free} and does not require the introduction of any image charge to handle dielectric mismatch at boundaries.
Consequently, this approach is capable of addressing challenges posed by extremely thin systems with polarizable boundaries, which are challenging to address with existing approaches.
\section{The model: concepts and mathematical preliminaries}\label{section....Preliminaries}

In this section, we introduce the key model concepts and mathematical notations that will be used throughout this paper. 
First, Fig.~\ref{fig:box} provides a schematic representation of the quasi-2D charged many-body system under confinement.  
We note that quasi-2D systems~\cite{mazars2011long} are essentially 3D systems, with a characteristic length scale in $z$ much smaller than that in the $xy$ plane. 
Physically, these systems model charged many-body interactions with dielectric confinement effects~\cite{dos2017simulations,gao2024broken}.

To accurately compute electrostatic interactions in quasi-2D systems, they are usually modeled as doubly periodic in the $xy$ directions to mimic a bulk-like environment, while incorporating appropriate dielectric confinement interfaces and boundary conditions in the $z$ direction~\cite{maxian2021fast,liang2020harmonic}. 
Specifically, we consider a simulation box with dimensions~$\left(L_x, L_y, L_z\right)$ in $\mathbb R^3$, where dielectric confinement interfaces are positioned at~$z = 0$ and~$ z = L_z$. 
As shown in Fig.~\ref{fig:box}, the two dielectric interfaces divide~$\sR^3$ into three distinct subdomains, denoted as~$\Omega_{\mrm{u}}$,~$\Omega_{\mrm{c}}$ and~$\Omega_{\mrm{d}}$ from top to bottom, respectively. 
The dielectric permittivity is defined as a piecewise constant in the three subdomains,
\begin{equation}\label{eq:sanwich}
    \epsilon(\V{r})=\left\{
        \begin{array}{ll}
        \epsilon_{\mrm{u}}, & {\V{r} \in \Omega_{\mrm{u}}}\\
        \epsilon_{\mrm{c}}, & {\V{r} \in \Omega_{\mrm{c}}}\\
        \epsilon_{\mrm{d}}, & {\V{r} \in \Omega_{\mrm{d}}}
    \end{array} \right.\;,
\end{equation}
where~$\epsilon_u$,~$\epsilon_c$ and~$\epsilon_d$ are positive material-specific  constants.



\begin{figure}[htbp!]
    \centering
    \begin{tikzpicture}[scale=0.6]

        \draw[->] (-5, 0, 0) -- (9, 0, 0) node[right] {$x$}; 
        \draw[->] (0, -1, 0) -- (0, 4, 0) node[above] {$z$}; 
        \draw[->] (0, 0, -1) -- (0, 0, 7) node[above] {$y$}; 

        \fill[orange, opacity=0.2] (-4.5, 0, -0.5) -- (8.5, 0, -0.5) -- (8.5, 0, 4.5) -- (-4.5, 0, 4.5);

        \draw[thick] (0, 0, 0) -- (4, 0, 0) -- (4, 3, 0) -- (0, 3, 0) -- cycle; 

        \foreach \x/\y/\z in {1.1681915835663497/0.4784386449304706/3.0114784670813646, 1.37700874310584/0.4564910752408716/1.723302872827872, 3.730994750135867/1.2345031904085388/2.103105645365449, 1.827895374334192/2.07357017874381/3.746941345635766, 3.066558538444058/1.5385426337573604/1.7619877149348904, 1.323901301301564/1.5495075401772844/0.6031052215686215, 3.756938726146066/2.4640736386255653/3.077635065930065, 2.8772957887165047/2.6997418470234025/2.9329762207627343, 3.2340462748359378/0.05840416805411408/3.426026606528974, 2.6025039709080175/2.9218252913656833/0.57717965442315666} {
            \draw[fill=red] (\x, \y, \z) circle (0.1);
        }

        \foreach \x/\y/\z in {2.6498617350978773/2.150049613814989/1.5048602780413822, 3.089169032276023/1.0562600341829351/3.3127362964215057, 2.5980246280095605/0.7722521776014004/3.9662770148938087, 1.1349256153410514/2.870534190907561/3.169316637530182, 3.1201708293697656/0.9091412698002154/1.6441954803083996, 1.9206578784987878/2.6240285163641914/1.8820001242058, 2.21640094284661/0.839090054913301/0.3248449650128413, 1.7055625414319642/1.0502926700849304/2.1025386222981464, 1.9224549177172139/1.8402480747416887/3.7853267350243893, 0.6942437596034523/1.2454445897320752/1.4473021778716602} {
            \draw[fill=blue] (\x, \y, \z) circle (0.1);
        }

        \foreach \dx/\dy in {-4/0, 4/0} {
            \foreach \x/\y/\z in {1.1681915835663497/0.4784386449304706/3.0114784670813646, 1.37700874310584/0.4564910752408716/1.723302872827872, 3.730994750135867/1.2345031904085388/2.103105645365449, 1.827895374334192/2.07357017874381/3.746941345635766, 3.066558538444058/1.5385426337573604/1.7619877149348904, 1.323901301301564/1.5495075401772844/0.6031052215686215, 3.756938726146066/2.4640736386255653/3.077635065930065, 2.8772957887165047/2.6997418470234025/2.9329762207627343, 3.2340462748359378/0.05840416805411408/3.426026606528974, 2.6025039709080175/2.9218252913656833/0.57717965442315666} {
            \draw[draw=red, fill=white] (\x + \dx, \y + \dy, \z) circle (0.1);
        }

            \foreach \x/\y/\z in {2.6498617350978773/2.150049613814989/1.5048602780413822, 3.089169032276023/1.0562600341829351/3.3127362964215057, 2.5980246280095605/0.7722521776014004/3.9662770148938087, 1.1349256153410514/2.870534190907561/3.169316637530182, 3.1201708293697656/0.9091412698002154/1.6441954803083996, 1.9206578784987878/2.6240285163641914/1.8820001242058, 2.21640094284661/0.839090054913301/0.3248449650128413, 1.7055625414319642/1.0502926700849304/2.1025386222981464, 1.9224549177172139/1.8402480747416887/3.7853267350243893, 0.6942437596034523/1.2454445897320752/1.4473021778716602} {
                \draw[draw=blue, fill=white] (\x + \dx, \y + \dy, \z) circle (0.1);
            }
        }

        \draw[thick] (0, 0, 4) -- (4, 0, 4) -- (4, 3, 4) -- (0, 3, 4) -- cycle; 
        \draw[thick] (0, 0, 0) -- (0, 0, 4); 
        \draw[thick] (4, 0, 0) -- (4, 0, 4); 
        \draw[thick] (4, 3, 0) -- (4, 3, 4); 
        \draw[thick] (0, 3, 0) -- (0, 3, 4); 

        \foreach \dx/\dy in {-4/0, 4/0} {
            \draw[thick, dashed] (\dx - 0.5, 0, \dy) -- (4.5 + \dx, 0, \dy);
            \draw[thick, dashed] (4 + \dx, 0, \dy) -- (4 + \dx, 3, \dy);
            \draw[thick, dashed] (4.5 + \dx, 3, \dy) -- (\dx - 0.5, 3, \dy);
            \draw[thick, dashed] (\dx, 0, \dy) -- (\dx, 3, \dy);

            \draw[thick, dashed] (\dx - 0.5, 0, 4 + \dy) -- (4.5 + \dx, 0, 4 + \dy);
            \draw[thick, dashed] (4 + \dx, 0, 4 + \dy) -- (4 + \dx, 3, 4 + \dy);
            \draw[thick, dashed] (4.5 + \dx, 3, 4 + \dy) -- (\dx - 0.5, 3, 4 + \dy);
            \draw[thick, dashed] (\dx, 0, 4 + \dy) -- (\dx, 3, 4 + \dy);

            \draw[thick, dashed] (\dx, 0, \dy - 0.5) -- (\dx, 0, 4.5 + \dy); 
            \draw[thick, dashed] (4 + \dx, 0, \dy - 0.5) -- (4 + \dx, 0, 4.5 + \dy); 
            \draw[thick, dashed] (4 + \dx, 3, \dy - 0.5) -- (4 + \dx, 3, 4.5 + \dy); 
            \draw[thick, dashed] (\dx, 3, \dy - 0.5) -- (\dx, 3, 4.5 + \dy); 
        }

        \fill[green, opacity=0.2] (-4.5, 3, -0.5) -- (8.5, 3, -0.5) -- (8.5, 3, 4.5) -- (-4.5, 3, 4.5) -- cycle;

        \node at (2, 0, 5) {$L_x$};
        \node at (4.6, 0, 2) {$L_y$};
        \node at (-0.5, 0.5, 0) {$L_z$};

        \draw[decorate, decoration={brace, amplitude=10pt, mirror}] (8.5, 0.2, 0) -- (8.5, 2.8, 0) node[midway,xshift=20] {$\Omega_{\mrm{c}}$};
        \draw[decorate, decoration={brace, amplitude=10pt, mirror}] (8.5, 3.2, 0) -- (8.5, 4.2, 0) node[midway,xshift=20] {$\Omega_{\mrm{u}}$};
        \draw[decorate, decoration={brace, amplitude=10pt, mirror}] (8.5, -1.2, 0) -- (8.5, -0.2, 0) node[midway,xshift=20] {$\Omega_{\mrm{d}}$};

    \end{tikzpicture}
    \caption{
        A schematic of the quasi-2D nanoconfined particle system.
        The filled circles denote the charged particles confined within the central simulation box, while the empty circles represent their periodic replicas in the $xy$ directions.
        The green and orange regions highlight the sharp dielectric interfaces, i.e., $\partial \Omega_c \cap \partial \Omega_{u}$ and $\partial \Omega_c \cap \partial \Omega_{d}$, respectively.
        Note that the system is periodic in both $x$ and $y$, but for clarity, only the periodic replicas along the~$x$ axis are sketched here.
    }
    \label{fig:box}
\end{figure}
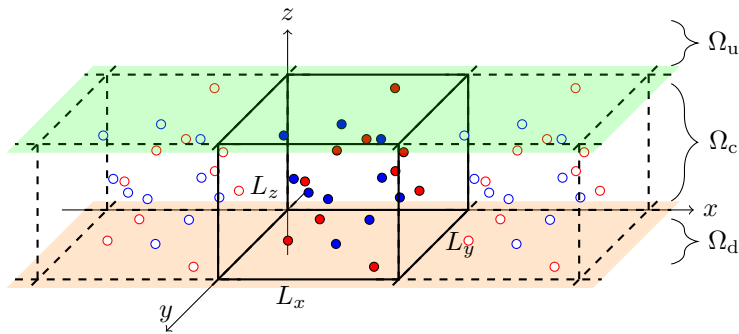

Next, we describe molecular dynamics (MD) simulations of quasi-2D charged systems in a canonical (NVT) ensemble.
Consider~$N$ charged particles inside the simulation box with positions~$\V{r}_i = (x_i, y_i, z_i)$, carrying charges~$q_i$ ($i=1,2,\ldots, N$), and satisfying the overall-charge-neutral condition (if not, the electrostatic interaction will be ill-defined under doubly periodic boundary conditions~\cite{gan2024fast}):
\begin{equation}\label{eq:totalneutral}
    \sum_{i = 1}^N q_i = 0\;.
\end{equation}
Then MD simulation of the $N$ charged particles in a canonical  ensemble essentially requires numerically integrating the following $6N$-dimensional Newton’s equations with Langevin thermostat~\cite{frenkel2023understanding},
\begin{equation}\label{eq:NewtonEquation}
    \begin{split}
        \mrm{d} \V{r}_i &= \V{v}_i \mrm{d} t\;,\\
        m_i \mrm{d} \V{v}_i &= \V{F}_i \mrm{d} t + \mrm{d} \V{\eta}_i \;,
    \end{split}
\end{equation}
where~$\V{F}_i$ represents the force exerted on the $i$-th particle, and~$\D \V{\eta}_i=-\gamma \V{v}_i \D t +  \sqrt{\frac{2\gamma}{\beta}} \D\V{W}_i$ represents the coupling with the environment (heat bath), where $\V{W}_i$ are i.i.d. Wiener processes, $\gamma$ the reciprocal characteristic time associated with the thermostat, and $\beta=\frac{1}{k_{\mathrm{B}}T}$.
It is understood that for Langevin dynamics, the damping term $-\gamma \V{v}_i \D t$ and thermal fluctuation term $\sqrt{\frac{2\gamma}{\beta}} \D\V{W}_i$ are balanced by the fluctuation-dissipation relationship, thus approximating a NVT ensemble.
Finally, the force~$\V{F}_i$ is computed via~$\V{F}_i:= - \grad_{\V{r}_i} U$, where $U$ is the potential energy of the system. 
Specifically, for electrostatic energy of quasi-2D systems, the potential energy $U$ reads,
\begin{equation}
    U = \frac{1}{2} {\sum_{\V{m}}}^\prime \sum_{i, j = 1}^{N} q_i q_j G(\V{r}_i;\V{r}_j+ \V{L}_{\V{m}} )\;,
\end{equation}
where~$\V{m}=(m_x, m_y, 0)$ with $ (m_x, m_y)\in \mathbb{Z}^2$, $\V{L}_{\V{m}} := (m_x L_x, m_y L_y, 0)$ denotes the quasi-2D lattice vectors, and the lattice sum
${\sum\limits_{\V{m}}}^\prime$ is defined such that the singular contribution in the bare Coulomb interaction for $i=j,~\vm=\vzero$ is excluded. 
Finally,  we define $G(\V{r};\V{r}^\prime)$ as the Green's function  in a quasi‑2D geometry, governed by Poisson’s equation subject to dielectric interface conditions:  
\begin{equation}\label{eq...text...Poisson}
    \left \{
    \begin{array}{ll}
        - \grad_{\V{r}} \cdot\left[ \epsilon(\V{r}) \grad_{\V{r}} G(\V{r};\V{r}^\prime) \right] = \delta (\V{r} - \V{r}^\prime), & \text{for}~\V r \in \mathbb{R}^3 \;, \\
        G(\V{r};\V{r}^\prime) |_{-} = G(\V{r};\V{r}^\prime) |_{+}, & \text{on}~\partial\Omega_{\mrm{c}}\;, \\
        \epsilon_{\mrm{c}} \partial_{z} G(\V{r};\V{r}^\prime) |_{-} = \epsilon_{\mrm{u}} \partial_{z} G(\V{r};\V{r}^\prime) |_{+}, & \text{on}~\partial \Omega_{\mrm{c}} \cap \partial \Omega_{\mrm{u}}\;,     \\
\epsilon_{\mrm{c}} \partial_{z} G(\V{r};\V{r}^\prime) |_{-} = \epsilon_{\mrm{d}} \partial_{z} G(\V{r};\V{r}^\prime) |_{+}, & \text{on}~\partial \Omega_{\mrm{c}} \cap \partial \Omega_{\mrm{d}}\;,\\
G(\V{r};\V{r}^\prime) \to 0, & \text{as}~{r} \to \infty\;,
\end{array}\right.
\end{equation}
where $r:=\Norm{\vr}=\sqrt{x^2+y^2+z^2}$, and $\Norm{\cdot}$ represents the Euclidean norm.

The ICM can be applied to analytically solve the above set of equations, which expresses the Green's function as an infinite series of image charges, constructed through an iterative process of image-charge reflections~\cite{jackson2021classical,liang2022hsma}. 
Note that after each reflection, the magnitude of image charges is weakened by a dimensionless factor $\gamma_{\mrm{u}}$ (or $\gamma_{\mrm{d}}$), measuring the dielectric contrasts for the upper and lower dielectric interfaces, defined as
\begin{equation}
    \gamma_{\mrm{u}} := \frac{\epsilon_{\mrm{c}} - \epsilon_{\mrm{u}}}{\epsilon_{\mrm{c}} + \epsilon_{\mrm{u}}},~~
    \gamma_{\mrm{d}} := \frac{\epsilon_{\mrm{c}} - \epsilon_{\mrm{d}}}{\epsilon_{\mrm{c}} + \epsilon_{\mrm{d}}}\;.
\end{equation}
Since the dielectric constants are all positive, consequently, ${\gamma_{\mrm{u}}}$ and $\gamma_{\mrm{d}}$ satisfy that $\Abs{\gamma_{\mrm{u}}}<1$ and $\Abs{\gamma_{\mrm{d}}}<1$,
hence the following infinite series  ICM representation of the Green's function~$G(\V{r};\V{r}^\prime)$ is guaranteed to be convergent:
\begin{equation}\label{eq:ICM}
    G(\V{r}; \V{r}^\prime) = \frac{1}{4 \pi \epsilon_{\mrm{c}}} \left[ \frac{1}{\Norm{\V{r} - \V{r}^\prime}} + \sum_{l = 1}^\infty \left( \frac{\gamma_l^+}{\Norm{\V{r} - \V{r}_{+}^{\prime(l)}}} + \frac{\gamma_l^-}{\Norm{\V{r} - \V{r}_{-}^{ \prime(l)}}} \right) \right]\;,
\end{equation}
where the factors for the $l$-th level image charges read $\gamma_l^{+}=\gamma_{d}^{\lceil l/2 \rceil} \gamma_{u}^{\lfloor l/2 \rfloor}$ and $\gamma_l^{-} = \gamma_{d}^{\lfloor l/2 \rfloor} \gamma_{u}^{\lceil l/2 \rceil}$. Here the superscript $+(-)$ indicates the location of image charges in $\Omega_{\textrm {u}}\left(\Omega_{\textrm {d}}\right)$. The notation $\lceil x\rceil(\lfloor x\rfloor)$ represents the ``ceil" (``floor") function, and the position of the $l$th-level images are given by,
\begin{equation}
\boldsymbol{r}_{\pm}^{\prime(l)}=\left(x^{\prime}, y^{\prime},(-1)^l z^{\prime} \pm l L_z\right)\;.
\end{equation}
The above ICM representation of the Green's function provides theoretical benchmark for systems considered here. However, it is important to note that due to the long-range nature of electrostatic interactions, Eq.~\eqref{eq:ICM} is not well suited for large-scale MD simulations of quasi-2D charged systems due to its high computational complexity, especially for strongly-confined systems (i.e., when $L_z\ll L_{x,y}$).

Finally, we revisit the idea of the Ewald splitting technique for efficient calculations of the electrostatic interactions in a quasi-2D geometry.
For quasi-2D systems \emph{without} dielectric interface conditions, the so-called Ewald2D method~\cite{parry1975electrostatic, heyes1977molecular, de1979electrostatic} has been developed, which decomposes the interaction energy $U$ as
\begin{equation}
    U= U_s+U_l\;,
\end{equation}
where the short-range and long-range components are given by
\begin{equation}\label{eq:Ewald2D}
    \begin{split}
        U_s &= \frac12 \sum_{i, j = 1}^N q_i q_j {\sum_{\V{m}}}^{\prime} \frac{\mrm{erfc}(\sqrt{\alpha} \Norm{\V{r}_i - \V{r}_j + \V{L_m}})}{\Norm{\V{r}_i - \V{r}_j + \V{L_m}}} - \sqrt{\frac{\alpha}{{\pi}}} \sum_{i = 1}^{N} q_i^2 \;, \\
        U_l &= \frac12 \sum_{i, j = 1}^N q_i q_j {\sum_{\V{m}}}  \frac{\mrm{erf}(\sqrt{\alpha} \Norm{\V{r}_i - \V{r}_j + \V{L_m}})}{\Norm{\V{r}_i - \V{r}_j + \V{L_m}}}\;,
    \end{split}
\end{equation}
respectively. Note that~$\mrm{erf}(x)$ is the error function 
\begin{equation}
    \mrm{erf}(x) := \frac{2}{\sqrt{\pi}} \int_0^{x} e^{-u^2} \mrm{d} u \;,
\end{equation}
and $\mrm{erfc}(x) := 1- \mrm{erf}(x)$, $\alpha >0$ is the splitting parameter that can be adjusted to maximize performance in MD simulations~\cite{frenkel2023understanding}. 
The computational efficiency can be enhanced via Ewald decomposition due to the fact that, (a) $U_s$ decays fast in real space and is short-ranged, so it can be efficiently evaluated via truncation; and (b) $U_l$ is still long-ranged, but the interaction becomes smooth even when $\V r_i\to \V r_j$ and $\V m=\V 0$, so that it converges spectrally in Fourier space.
Consequently, based on \cite[Lemma 2.4]{gan2024fast}, the potential $U_l$ can be transformed into the following summation form and evaluated in the $k$-space,
\begin{equation}
 U_l  =  U_l^{\bm{0}} +\sum_{\V{k} \neq \bm{0}} U_l^{\V{k}}  \;,   
\end{equation}
where,
	\begin{equation}\label{eq:philk}
	 U_l^{\V{k}} := \frac{\pi}{2L_x L_y} \sum_{i,j = 1}^N q_i q_{j} \frac{e^{\mathrm{i} \V{k} \cdot \V{\rho}_{ij}}}{k} \left[\xi^{+}(k, z_{ij})+\xi^{-}(k, z_{ij})\right],
	\end{equation}
	with $\bm{\rho}_{ij}:=(x_{i}-x_{j},y_{i}-y_{j})$, $z_{ij}:=\Norm{z_{i}-z_{j}}$,  $k:=\Norm{\vk}$, and 
	\begin{equation}\label{eq::9}
		\xi^{\pm}(k, z_{ij}) := e^{{\pm k z_{ij}}} \mrm{erfc} \left( \frac{k}{2 \sqrt{\alpha}} \pm \sqrt{\alpha} z_{ij}\right)\;,
	\end{equation} 
	\begin{equation}\label{eq:phil0}
		 U_l^{\bm{0}}  := -\frac{ \pi}{L_x L_y} \sum_{j = 1}^N q_iq_{j} \left[ {z_{ij}} \mrm{erf}(\sqrt{\alpha} {z_{ij}}) + \frac{e^{-\alpha z_{ij}^2}}{\sqrt{\alpha \pi}}  \right]\;.
	\end{equation} 
While the Ewald2D summation accelerates numerical computations for the interaction energy $U$, it has several significant drawbacks: (a) even with an optimal choice of $\alpha$, computing   $U$ for an $N$-particle system requires a computational complexity of $O\left(N^2\right)$, which is still very expensive for large particle numbers; (b) the functions $\xi^{ \pm}\left(k, z_{i j}\right)$ are ill-conditioned, as they grow exponentially with
$k z_{{i j}}$, resulting in catastrophic error cancellation in numerical computations; and (c) the Ewald2D formula is applicable only to homogeneous systems without dielectric interface conditions. 
These challenges underscore the need to develop novel numerical methods that can simultaneously handle long-range interactions, quasi-2D geometry, and dielectric interface conditions, while ensuring both numerical stability and computational efficiency.
\section{The quasi-Ewald splitting for Green's function}

In this section, we present the quasi-Ewald splitting, a novel kernel decomposition strategy tailored for quasi-2D geometry. 

First, consider a point source located at $\V{r}^\prime=(x^{\prime}, y^{\prime},z^{\prime})$. The quasi-Ewald splitting decomposes the Dirac delta function as
\begin{equation}\label{eq...text...Quasi-EwaldMethod...QuasiEwaldSplitting}
\begin{aligned}
    \delta(\V{r}-\V{r}^\prime) &= \left[\delta (\V{r}-\V{r}^\prime) - {\left( \frac{\alpha}{\pi} \right)} e^{-\alpha \Norm{\vrho-\vrho^\prime}^2} \delta(z-z^\prime) \right]\\&~~ + {\left( \frac{\alpha}{\pi} \right)} e^{-\alpha \Norm{\vrho-\vrho^\prime}^2} \delta(z-z^\prime)\;,
    \end{aligned}
\end{equation}
where $\vr:=(\vrho, z):=(x,y,z)$,   $\vrho:=(x,y)$, and $\vrho^\prime:=(x^\prime,y^\prime)$.
A 3D illustration of the quasi-Ewald splitting is shown in Fig.~\ref{fig:QuasiEwaldSplitting} for charged particles in a unit simulation cell under quasi-2D confinement.
In Eq.~\eqref{eq...text...Quasi-EwaldMethod...QuasiEwaldSplitting}, the first term on the right-hand side (RHS) represents a point source \emph{screened} by a Gaussian density centered at $\vr^\prime$, spread in the $xy$ plane to align with the quasi-2D cylindrical symmetry. 
The second term on the RHS compensates for this by adding back the additional Gaussian contribution to restore the original Delta function.
We further define the two densities on the RHS separately as, 
\begin{equation}\label{eq...text...Kernels}
    \begin{split}
        \sigma_{s}(\V{r}; \vr^\prime) &:= \delta (\V{r}-\V{r}^\prime) - {\left( \frac{\alpha}{\pi} \right)}e^{-\alpha \Norm{\vrho-\vrho^\prime}^2} \delta(z-z^\prime)\;,\\
        \sigma_{l}(\V{r};\vr^\prime) &:= {\left( \frac{\alpha}{\pi} \right)} e^{-\alpha \Norm{\vrho-\vrho^\prime}^2}\delta(z-z^\prime)\;.
    \end{split}
\end{equation}
\begin{figure}
    \centering
    \begin{tikzpicture}[scale = 0.6]

        \def\r{0.2}

        \draw[thick, dashed] (0, 0, 0) -- (4, 0, 0);
        \draw[thick] (4, 0, 0) -- (4, 3, 0);
        \draw[thick] (4, 3, 0) -- (0, 3, 0);
        \draw[thick, dashed] (0, 3, 0) -- (0, 0, 0);

        \foreach \x/\y/\z in {1.1681915835663497/0.4784386449304706/3.0114784670813646, 1.37700874310584/0.4564910752408716/1.723302872827872, 3.730994750135867/1.2345031904085388/2.103105645365449} {
            \shade[ball color=red!60, opacity=1] (\x, \y, \z) circle (\r);
        }

        \foreach \x/\y/\z in {1.827895374334192/2.07357017874381/3.746941345635766, 3.066558538444058/1.5385426337573604/1.7619877149348904, 1.323901301301564/1.5495075401772844/0.6031052215686215} {
            \shade[ball color=blue!60, opacity=1] (\x, \y, \z) circle (\r);
        }

        \draw[thick] (0, 0, 4) -- (4, 0, 4) -- (4, 3, 4) -- (0, 3, 4) -- cycle; 
        \draw[thick, dashed] (0, 0, 0) -- (0, 0, 4); 
        \draw[thick] (4, 0, 0) -- (4, 0, 4); 
        \draw[thick] (4, 3, 0) -- (4, 3, 4); 
        \draw[thick] (0, 3, 0) -- (0, 3, 4); 

        \foreach \x/\y/\z in {1.1681915835663497/0.4784386449304706/3.0114784670813646, 1.37700874310584/0.4564910752408716/1.723302872827872, 3.730994750135867/1.2345031904085388/2.103105645365449} {
            \fill[fill=blue!60, opacity=0.3] (\x + 8, \y, \z) ellipse (0.7 and 0.3);
            \shade[ball color=red!60, opacity=1] (\x + 8, \y, \z) circle (\r);
        }

        \foreach \x/\y/\z in {1.827895374334192/2.07357017874381/3.746941345635766, 3.066558538444058/1.5385426337573604/1.7619877149348904, 1.323901301301564/1.5495075401772844/0.6031052215686215} {
            \fill[fill=red!60, opacity=0.3] (\x + 8, \y, \z) ellipse (0.7 and 0.3);
            \shade[ball color=blue!60, opacity=1] (\x + 8, \y, \z) circle (\r);
        }

        \foreach \x/\y/\z in {1.1681915835663497/0.4784386449304706/3.0114784670813646, 1.37700874310584/0.4564910752408716/1.723302872827872, 3.730994750135867/1.2345031904085388/2.103105645365449} {
            \fill[fill=red!60, opacity=0.3] (\x + 16, \y, \z) ellipse (0.7 and 0.3);
        }

        \foreach \x/\y/\z in {1.827895374334192/2.07357017874381/3.746941345635766, 3.066558538444058/1.5385426337573604/1.7619877149348904, 1.323901301301564/1.5495075401772844/0.6031052215686215} {
            \fill[fill=blue!60, opacity=0.3] (\x + 16, \y, \z) ellipse (0.7 and 0.3);
        }

        \foreach \x/\y/\z in {8/0/0, 16/0/0} {
            \draw[thick, dashed] (\x, \y, \z) -- (4 + \x, \y, \z);
            \draw[thick] (4 + \x, \y, \z) -- (4 + \x, 3 + \y, \z);
            \draw[thick] (4 + \x, 3 + \y, \z) -- (\x, 3 + \y, \z);
            \draw[thick, dashed] (\x, \y, \z) -- (\x, 3 + \y, \z);
            \draw[thick] (\x, \y, 4 + \z) -- (4 + \x, \y, 4 + \z) -- (4 + \x, 3 + \y, 4 + \z) -- (\x, 3 + \y, 4 + \z) -- cycle;
            \draw[thick, dashed] (\x, \y, \z) -- (\x, \y, 4 + \z);
            \draw[thick] (4 + \x, \y, \z) -- (4 + \x, \y, 4 + \z);
            \draw[thick] (4 + \x, 3 + \y, \z) -- (4 + \x, 3 + \y, 4 + \z);
            \draw[thick] (\x, 3 + \y, \z) -- (\x, 3 + \y, 4 + \z);
        }

        \node at (5.2, 1, 0) {$=$};
        \node at (13.2, 1, 0) {$+$};

        \node at (1.5, 3.5, 0) {(a)};
        \node at (9.5, 3.5, 0) {(b)};
        \node at (17.5, 3.5, 0) {(c)};

    \end{tikzpicture}
    \caption{
        An illustration of the Quasi-Ewald splitting strategy in a unit cell.
        (a) Charged particles are distributed in the simulation box, cations and anions are represented by red and blue spheres, respectively.
        Sub-figures (b) and (c) show how Quasi-Ewald splitting works. 
        The discs represent 2D Gaussian charge density clouds in the $xy$ planes, which screen the point sources.
        One thus splits the original problem into two sub-problems, one with charge neutral particles so that the interactions are short-ranged, the other with smooth charge density which fits into the cylindrical symmetry of quasi-2D geometry so that it can be solved rapidly in the reciprocal space.
    }
    \label{fig:QuasiEwaldSplitting}
\end{figure}
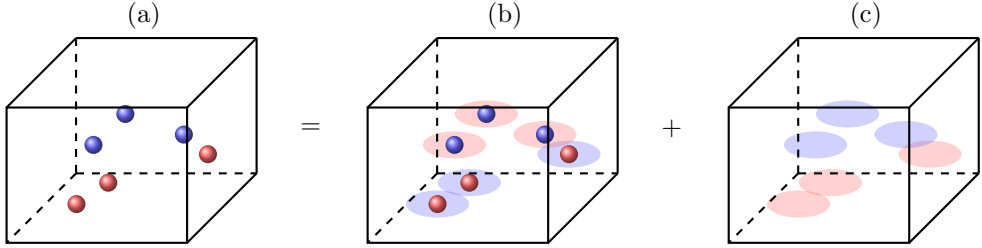

Next, we define the following \emph{quasi-2D Fourier transform}, which will be extensively used throughout this paper.
\begin{definition}\label{defi:2dfourier}
Given a function $f(\vrho, z)$, its quasi-2D Fourier
transform is defined by
  \begin{equation} 
\Hat{f}(\vk, z):=\int_{\sR^2}f(\vrho,z)e^{-\mathrm{i} \V{k} \cdot \V{\rho}} \D \vrho \;.
\end{equation}
The function  $f(\vrho, z)$ can be recovered from $\Hat{f}(\vk, z)$ by its corresponding inverse quasi-2D Fourier transform
  \begin{equation} 
{f}(\vrho, z)=\frac{1}{4\pi^2}\int_{\sR^2}\Hat{f}(\vk, z)e^{\mathrm{i} \V{k} \cdot \V{\rho}} \D \vk \;.
\end{equation}
\end{definition}
Taking the quasi-2D Fourier transform for $\sigma_{s}$ and $\sigma_{l}$ in Eq.~\eqref{eq...text...Kernels}, we obtain their counterparts $\Hat{\sigma}_{s}$ and $\Hat{\sigma}_{l}$, summarized in the following proposition.
\begin{proposition}\label{prop:1}
Given a fixed point     $\V{r}^\prime=(x^{\prime}, y^{\prime},z^{\prime})$ with $\V{\rho}^\prime=(x^{\prime}, y^{\prime})$,  the  non-zero modes of the 2D-Fourier
transform  of  $ \sigma_{s}(\V{r};\vr^\prime)$ and   $\sigma_{l}(\V{r};\vr^\prime)$  respectively read
\begin{align*}
 \Hat{\sigma}_{s}(\vk,z;z^\prime) &= \delta(z-z^\prime) e^{-\mathrm{i} \V{k} \cdot \V{\rho}^\prime}\left(1-e^{-\frac{k^2}{4\alpha}}\right)\;,  \\
 \Hat{\sigma}_{l}(\vk,z;z^\prime) &= \delta(z-z^\prime) e^{-\mathrm{i} \V{k} \cdot \V{\rho}^\prime}e^{-\frac{k^2}{4\alpha}}\;,
\end{align*}   
where $\vk\neq\vzero$, and the zeroth mode reads
\begin{align*}
 \Hat{\sigma}_{s}(\vzero,z;z^\prime) &= 0\;,  \\
 \Hat{\sigma}_{l}(\vzero,z;z^\prime) &= \delta(z-z^\prime)\;.
\end{align*}   
\end{proposition}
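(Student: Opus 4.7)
The plan is a direct calculation: apply the quasi-2D Fourier transform of Definition~\ref{defi:2dfourier} to the two densities in Eq.~\eqref{eq...text...Kernels} term by term, using linearity, the sifting property of the Dirac delta, and the closed-form Fourier transform of a 2D isotropic Gaussian.

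First I would handle $\sigma_l$. Because the $z$-dependence factors as $\delta(z-z^\prime)$, only the $\vrho$-integral is nontrivial. Substituting $\vu=\vrho-\vrho^\prime$ pulls out the phase $e^{-\mathrm{i}\vk\cdot\vrho^\prime}$ and reduces the computation to
\begin{equation*}
\int_{\sR^2}\frac{\alpha}{\pi}e^{-\alpha\Norm{\vu}^2}e^{-\mathrm{i}\vk\cdot\vu}\,\D\vu = e^{-k^2/(4\alpha)},
\end{equation*}
which is the standard 2D Gaussian Fourier identity (easily verified by completing the square, or by taking the tensor product of two 1D Gaussian transforms). This yields $\Hat{\sigma}_l(\vk,z;z^\prime)=\delta(z-z^\prime)e^{-\mathrm{i}\vk\cdot\vrho^\prime}e^{-k^2/(4\alpha)}$ for all $\vk$, and specializing to $\vk=\vzero$ gives $\Hat{\sigma}_l(\vzero,z;z^\prime)=\delta(z-z^\prime)$ since both the phase and the Gaussian evaluate to $1$. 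The prefactor $\alpha/\pi$ is precisely the normalization that makes the Gaussian a probability density in $\vrho$, so no stray constants appear.

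Next I would treat $\sigma_s$ by linearity. The transform of the full $\delta(\vr-\vr^\prime)$ is obtained immediately from the sifting property:
\begin{equation*}
\int_{\sR^2}\delta(\vrho-\vrho^\prime)\delta(z-z^\prime)e^{-\mathrm{i}\vk\cdot\vrho}\,\D\vrho = \delta(z-z^\prime)e^{-\mathrm{i}\vk\cdot\vrho^\prime}.
\end{equation*}
Subtracting the already-computed $\Hat{\sigma}_l$ produces the factor $(1-e^{-k^2/(4\alpha)})$, and at $\vk=\vzero$ these two contributions cancel to give $0$, as claimed. This cancellation is exactly the manifestation of charge neutrality of the screened density in Fig.~\ref{fig:QuasiEwaldSplitting}(b), which is the whole point of the splitting.

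There is no genuine obstacle here; the only thing one must be careful about is the bookkeeping of the phase factor $e^{-\mathrm{i}\vk\cdot\vrho^\prime}$, which arises because $\vrho^\prime\neq\vzero$ in general, and treating the Dirac masses as distributions so that the interchange of $\int_{\sR^2}\D\vrho$ with the delta in $z$ is justified (both densities are tempered distributions in $\vrho$ for fixed $z$, so the quasi-2D Fourier transform is well-defined in the distributional sense). Once these book-keeping items are settled, all four formulas in the proposition follow in a single line each.
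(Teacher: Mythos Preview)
Your proposal is correct and is exactly the direct calculation the paper has in mind; the paper's own proof simply states ``By direct application of Definition~\ref{defi:2dfourier}, the integrals can be evaluated analytically, hence we omit the details,'' so you have filled in precisely the omitted steps.
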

\begin{proof}
By direct application of Definition~\ref{defi:2dfourier}, the integrals can be evaluated analytically, hence we omit the details.
\end{proof}
Now, given a source point $\vr_0:=(\vrho_0,z_0):=(x_0, y_0, z_0)\in\Omega_{\mrm{c}}$, the Green's function  $G(\V{r}; \V{r_0})$ (as defined in Eq.~\ref{eq...text...Poisson}) can be solved in $k$-space by using Prop~\ref{prop:1} and the Dirichlet-to-Neumann map. 
The resulting expressions are summarized in the following Theorem \ref{Proposition...GreensFunction}, whose proof can be found in Appendix.
\begin{theorem}\label{Proposition...GreensFunction}
    Let us define
    \begin{equation}
        \Hat{G}(\vk, z; z_0) = \int_{\mathbb{R}^2} G(\V{r}; \V{r_0}) e^{-\mathrm{i} \V{k} \cdot \V{\rho}} \D \vrho\;,
    \end{equation}
then for $\vk=\vzero$,
\begin{equation}\label{eq...hat_G_solution...Zeroth}
        \hat{G}(\vzero, z; z_0) = \frac{\Abs{z-z_0}}{2 \epsilon_{\mathrm{c}}}\;,
\end{equation}
and for $\vk\neq \vzero$,
    \begin{equation}\label{eq...hat_G_solution...Non-zero}
    \begin{aligned}
\hat{G}(\vk, z; z_0) &=  \frac{ e^{-\mathrm{i} \V{k} \cdot \V{\rho}_0}}{2k\epsilon_{\mrm{c}}\left(1-\gamma_{\mrm{u}}\gamma_{\mrm{d}}e^{-2kL_z}\right)} \sum_{p=1}^4\Gamma_pe^{-ka_p(z;z_0)}\;, 
    \end{aligned}     
    \end{equation}
where 
\begin{align*}
 \Gamma_{1:4}&:=\left[1,\gamma_{\mrm{d}},\gamma_{\mrm{u}},\gamma_{\mrm{u}}\gamma_{\mrm{d}}\right]\;,\\
 a_{1:4}(z;z_0)&:=\left[\Abs{z-z_0}, z+z_0,2L_z-z-z_0,2L_z-\Abs{z-z_0}\right]\;.
\end{align*}
\end{theorem}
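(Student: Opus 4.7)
The plan is to apply the quasi-2D Fourier transform of Definition~\ref{defi:2dfourier} to the Poisson problem~\eqref{eq...text...Poisson}, reducing it to a one-dimensional boundary-value problem in $z$ parameterised by $\vk$, with a delta source at $z_0$ inside $\Omega_{\mrm c}$. Because $\epsilon$ is piecewise constant, the transformed equation becomes a homogeneous constant-coefficient ODE in each subdomain, the two dielectric jump conditions become continuity of $\hat G$ and of $\epsilon\,\partial_z\hat G$ at $z=0$ and $z=L_z$, and the decay condition at $\Abs{z}\to\infty$ selects a single exponential in each outer region. I would then write the general solution in each subdomain, solve for the unknown coefficients, and match the resulting closed form to~\eqref{eq...hat_G_solution...Non-zero}.

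For $\vk\neq\vzero$ the transformed equation in each region of constant $\epsilon$ is $\partial_{zz}\hat G-k^2\hat G=-\epsilon^{-1}\delta(z-z_0)\,e^{-\mathrm{i}\vk\cdot\vrho^\prime}$, so I would use the ansatz
\begin{equation*}
\hat G_{\mrm u}=A_{\mrm u}e^{-kz},\quad \hat G_{\mrm d}=A_{\mrm d}e^{kz},\quad \hat G_{\mrm c}=\frac{e^{-\mathrm{i}\vk\cdot\vrho_0}}{2k\epsilon_{\mrm c}}e^{-k\Abs{z-z_0}}+B_{+}e^{kz}+B_{-}e^{-kz},
\end{equation*}
in which the first piece of $\hat G_{\mrm c}$ is the free-space quasi-2D Green's function for the source (providing the jump of $\partial_z\hat G$ at $z_0$) and $B_{\pm}$ encode the dielectric response of the two interfaces. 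The four interface conditions at $z=0,L_z$ yield, after elimination of $A_{\mrm u},A_{\mrm d}$, a $2\times 2$ linear system for $(B_{+},B_{-})$ whose determinant equals $1-\gamma_{\mrm u}\gamma_{\mrm d}e^{-2kL_z}$; this is exactly the denominator in~\eqref{eq...hat_G_solution...Non-zero}. Expanding the solution then identifies each surviving exponential with an image-charge ray consistent with~\eqref{eq:ICM}: $a_1=\Abs{z-z_0}$ is the direct term, $a_2=z+z_0$ and $a_3=2L_z-z-z_0$ are the single reflections at the lower and upper interfaces with weights $\gamma_{\mrm d}$ and $\gamma_{\mrm u}$, and $a_4=2L_z-\Abs{z-z_0}$ is the first double reflection with weight $\gamma_{\mrm u}\gamma_{\mrm d}$; all higher-order multiple reflections fold geometrically into the factor $(1-\gamma_{\mrm u}\gamma_{\mrm d}e^{-2kL_z})^{-1}$.

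For the zero mode $\vk=\vzero$ the exponential scale disappears and the solutions outside $\Omega_{\mrm c}$ degenerate to linear functions, so the decay condition must be replaced by a consistency interpretation; the overall charge-neutrality hypothesis~\eqref{eq:totalneutral} ensures that the unbounded linear pieces cancel after summation over all particles, and only the part supported in $\Omega_{\mrm c}$ survives. Integrating the one-dimensional zero-mode equation $-\epsilon_{\mrm c}\partial_{zz}\hat G=\delta(z-z_0)$ and fixing the integration constant by the antisymmetry imposed by the two interface conditions then yields the piecewise-linear kernel stated in~\eqref{eq...hat_G_solution...Zeroth}.

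I expect the main obstacle to be the non-zero-mode algebra: after solving the $2\times 2$ system one obtains rational expressions in $e^{\pm kz}$ and $e^{-kL_z}$, and one must verify that multiplying through by $(1-\gamma_{\mrm u}\gamma_{\mrm d}e^{-2kL_z})$ reorganises them exactly into the compact $(\Gamma_p,a_p)$ form. Care is particularly needed in combining the two sub-cases $z\gtrless z_0$, since the absolute value $\Abs{z-z_0}$ appearing in $a_1$ and $a_4$ must emerge consistently from both branches of the free-space factor $e^{-k\Abs{z-z_0}}$ and of its doubly-reflected image.
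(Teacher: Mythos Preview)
Your proposal is correct and follows essentially the same route as the paper: apply the quasi-2D Fourier transform to reduce~\eqref{eq...text...Poisson} to a one-dimensional ODE in $z$, impose decay at $\pm\infty$ to fix single exponentials in $\Omega_{\mrm u},\Omega_{\mrm d}$, then use the four interface conditions together with the jump at $z_0$ to determine the remaining coefficients. The only cosmetic difference is that the paper writes $\hat G_{\mrm c}$ piecewise on $(0,z_0)$ and $(z_0,L_z)$ with four unknowns (and invokes a Dirichlet-to-Neumann reduction of the outer regions), whereas you build in the jump via the free-space particular solution $e^{-k|z-z_0|}/(2k\epsilon_{\mrm c})$ and carry only two homogeneous coefficients $B_\pm$; both lead to the same linear system with determinant $1-\gamma_{\mrm u}\gamma_{\mrm d}e^{-2kL_z}$.
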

\begin{remark}\label{Remark...Nonzero+larger}
 (a) The condition $\abs{\gamma_u \gamma_d} < 1$  ensures the uniqueness of   the quasi-2D Green's function;
 (b) All the components in $a_{1:4}(z;z_0)$ have positive values; (c) The overall charge neutrality condition Eq.~\eqref{eq:totalneutral} is needed so that for $k=0$, the infinite boundary condition can be met for $N$-particle systems.
\end{remark}
Based on the quasi-Ewald splitting we introduced in Eqs.~\eqref{eq...text...Quasi-EwaldMethod...QuasiEwaldSplitting}-\eqref{eq...text...Kernels} for the delta source, that is, $\delta(\V{r}-\V{r}_0)=\sigma_s  (\V{r};\V{r}_0)+\sigma_l  (\V{r};\V{r}_0)\;,$
we shall find the corresponding decomposition for the Green's function as
\begin{equation}
    G(\V{r}; \V{r_0})= G_s(\V{r}; \V{r_0})+ G_l(\V{r}; \V{r_0})\;,
\end{equation}
where $ G_s(\V{r}; \V{r_0})$ and $ G_l(\V{r}; \V{r_0})$ are the electrostatic potentials generated by the charge densities $\sigma_s  (\V{r};\V{r}_0)$ and $\sigma_l  (\V{r};\V{r}_0)$, respectively.
These can be solved analogously by using Prop.~\ref{prop:1}; the resulting expressions are summarized in the following corollary.
\begin{corollary}\label{corollary1}
The Green’s function given in Theorem~\ref{Proposition...GreensFunction} admits a quasi‑Ewald splitting, namely,
\begin{equation}
    G(\V{r}; \V{r_0})= G_s(\V{r};\V{r_0})+ G_l(\V{r}; \V{r_0})\;,
\end{equation}
where for any $\vk\neq \vzero$, 
    \begin{align*}
        \hat{G}_s(\vk, z; z_0)&=\hat{G}(\vk, z; z_0)\left(1-e^{-\frac{k^2}{4\alpha}}\right)\;,\\
        \hat{G}_l(\vk, z; z_0)&=\hat{G}(\vk, z; z_0)e^{-\frac{k^2}{4\alpha}}\;,
    \end{align*} 
    and for  $\vk=\vzero$, 
        \begin{align*}
        \hat{G}_s(\vzero, z; z_0)&=0\;,\\
        \hat{G}_l(\vzero, z; z_0)&=\hat{G}(\vzero, z; z_0)\;.
    \end{align*} 
\end{corollary}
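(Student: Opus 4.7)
The plan is to leverage the linearity of the Poisson problem in Eq.~\eqref{eq...text...Poisson}, combined with the quasi-2D Fourier transform and Proposition~\ref{prop:1}, to reduce the claim to an elementary scaling observation.

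First, since $\sigma_s(\V{r};\V{r}_0)+\sigma_l(\V{r};\V{r}_0)=\delta(\V{r}-\V{r}_0)$ by construction in Eq.~\eqref{eq...text...Quasi-EwaldMethod...QuasiEwaldSplitting}, and since the governing equations for $G$ are linear with homogeneous interface data (continuity across $\partial\Omega_{\mrm{c}}$, the $\epsilon$-weighted jump in the normal derivative, and decay as $r\to\infty$), I would first define $G_s$ and $G_l$ to be the unique solutions of the Poisson problem with sources $\sigma_s$ and $\sigma_l$, respectively. The superposition $G=G_s+G_l$ then follows from linearity and uniqueness, which is itself ensured by $|\gamma_{\mrm{u}}\gamma_{\mrm{d}}|<1$ (cf.\ Remark~\ref{Remark...Nonzero+larger}(a)).

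Second, I would take the quasi-2D Fourier transform in $\vrho$, converting the PDE for each fixed $\vk$ into a linear ODE in $z$ on the three subdomains $\Omega_{\mrm{d}},\Omega_{\mrm{c}},\Omega_{\mrm{u}}$, glued by the same interface jump conditions as for $\hat{G}$, with a source concentrated at $z=z_0$. By Proposition~\ref{prop:1}, for any $\vk\neq\vzero$,
\begin{equation*}
\hat{\sigma}_s(\vk,z;z_0)=\bigl(1-e^{-k^2/4\alpha}\bigr)\,\delta(z-z_0)e^{-\mathrm{i}\vk\cdot\vrho_0},\qquad
\hat{\sigma}_l(\vk,z;z_0)=e^{-k^2/4\alpha}\,\delta(z-z_0)e^{-\mathrm{i}\vk\cdot\vrho_0},
\end{equation*}
which are exactly $(1-e^{-k^2/4\alpha})$ and $e^{-k^2/4\alpha}$ times the Fourier-transformed delta source that drove the derivation of Theorem~\ref{Proposition...GreensFunction}. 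Since the $z$-ODE together with its interface/decay conditions is linear and the data are homogeneous, uniqueness forces the solution to scale by the same constant, yielding the two stated identities for $\vk\neq\vzero$.

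Third, for $\vk=\vzero$, Proposition~\ref{prop:1} gives $\hat{\sigma}_s(\vzero,z;z_0)=0$, so $\hat{G}_s(\vzero,z;z_0)$ satisfies a homogeneous ODE that decays at infinity, hence vanishes; and $\hat{\sigma}_l(\vzero,z;z_0)=\delta(z-z_0)$ coincides with the original zeroth-mode source, giving $\hat{G}_l(\vzero,z;z_0)=\hat{G}(\vzero,z;z_0)$ by Eq.~\eqref{eq...hat_G_solution...Zeroth}. The only step that needs care is the uniqueness invoked in the scaling argument, but this is already provided by Remark~\ref{Remark...Nonzero+larger}(a), so no new computation is necessary beyond citing Proposition~\ref{prop:1} and Theorem~\ref{Proposition...GreensFunction}.
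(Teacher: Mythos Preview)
Your proposal is correct and follows essentially the same route as the paper: the paper's proof simply replaces the delta source by $\sigma_s$ and $\sigma_l$, invokes Proposition~\ref{prop:1}, and repeats the derivation of Theorem~\ref{Proposition...GreensFunction}, which is exactly your linearity-plus-scaling argument spelled out in more detail. The only minor difference is that you make the uniqueness ingredient explicit via Remark~\ref{Remark...Nonzero+larger}(a), whereas the paper leaves it implicit.
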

\begin{proof}
The proof is straightforward: one can solve for $\hat{G}_s$ and $\hat{G}_l$ by replacing the Dirac delta source with $\sigma_s$ and $\sigma_l$, then using Prop.~\ref{prop:1} and following the same steps as was done in the proof of Thm.~\ref{Proposition...GreensFunction}.
We omit the details.
\end{proof}
Finally, one can recover $ G_s(\V{r}; \V{r_0})$ and $ G_l(\V{r}; \V{r_0})$ in real space by using the inverse quasi-2D Fourier transform,
\begin{equation*}
\begin{aligned}
 G_s(\V{r}; \V{r_0})&=   \frac{1}{4\pi^2}\int_{\sR^2} \hat{G}_s(\vk, z; z_0)e^{\mathrm{i} \V{k} \cdot \V{\rho}} \D \vk= \frac{1}{4\pi^2}\int_{\sR^2} \hat{G}(\vk, z; z_0)\left(1-e^{-\frac{k^2}{4\alpha}}\right)e^{\mathrm{i} \V{k} \cdot \V{\rho}} \D \vk\;,\\
 G_l(\V{r}; \V{r_0})&=   \frac{1}{4\pi^2}\int_{\sR^2} \hat{G}_l(\vk, z; z_0)e^{\mathrm{i} \V{k} \cdot \V{\rho}} \D \vk= \frac{1}{4\pi^2}\int_{\sR^2} \hat{G}(\vk, z; z_0)e^{-\frac{k^2}{4\alpha}}e^{\mathrm{i} \V{k} \cdot \V{\rho}} \D \vk\;.
\end{aligned}    
\end{equation*}
Thus the total electrostatic interaction energy $U$ can be expressed as
\begin{align*}
   U & =   U_s+U_l\;,
\end{align*}
where
\begin{equation}\label{eq::U_s}
U_s:=    \frac{1}{2} {\sum_{\V{m}}}^\prime \sum_{i, j = 1}^{N} q_i q_j G_s(\V{r}_i; \V{r}_j+\vL_m)\;, \end{equation}
and 
\begin{equation}\label{eq::U_l}
U_l:=   \frac{1}{2} {\sum_{\V{m}}}\sum_{i, j = 1}^{N} q_i q_j G_l(\V{r}_i; \V{r}_j+\vL_m)\;. 
\end{equation}
In the following sections, we will present a collection of numerical techniques that enable the evaluation of Eqs.~\eqref{eq::U_s}-\eqref{eq::U_l} in particle-based simulations with an optimal computational cost of $\mathcal O(N)$.

\section{Optimal quadrature rules for short-range interactions}
We begin by deriving an optimal quadrature rule for the evaluation of the short-range interaction energy $U_s$. 
As discussed earlier, $G_s$ is the potential field generated by~$\sigma_s$, which represents a source point screened by a flattened Gaussian density surrounding it.
Due to this fact, the leading order of $G_s$ in the far field behaves as a quadrupole, i.e., the potential decays as~$1/r^3$, which is a short-range interaction in $\mathbb R^3$.
Consequently, the infinite lattice summation in Eq.~\eqref{eq::U_s} can be truncated to a finite sum in real space with an inter-particle cutoff distance $r_c$, 
    \begin{equation}\label{eq::U_s_truncated}
        U_s \approx U_{s, *} = {\sum_{\Norm{\bm{\rho}_{ij, m}} \leq r_c}}^\prime q_i q_j G_s(\V{r}_i; \V{r}_j+\vL_m)\;,
    \end{equation}
    where $\bm{\rho}_{ij, m}:=\vrho_i-\vrho_j+ (L_x m_x, L_y m_y)$, indicating that among all possible pairs of particles, only the terms with $\Norm{\bm{\rho}_{ij, m}} \leq r_c$ need to be computed. As such, neighbor-list algorithms can be applied to reduce the cost to $\mathcal{O}(N)$~\cite{frenkel2023understanding} if~$r_c$ is a constant that does not vary with the system size.

The main difficulty in computing Eq.~\eqref{eq::U_s_truncated} lies in efficient and accurate quadrature rules for evaluating $G_s$, which reads,
\begin{align*}
   &G_s(\V{r}_i; \V{r}_j) \\
   =& \frac{1}{8\pi^2\epsilon_{\mrm{c}}}\int_{\sR^2} \frac{e^{\mathrm{i} \V{k} \cdot \V{\rho}_{ij}}}{k\left(1-\gamma_{\mrm{u}}\gamma_{\mrm{d}}e^{-2kL_z}\right)} \left(1-e^{-\frac{k^2}{4\alpha}}\right)\left(\sum_{p=1}^4\Gamma_pe^{-ka_p(z_i;z_j)}\right)\D \vk\;,
\end{align*}
where $\V{\rho}_{ij}:=\vrho_i-\vrho_j$. Due to the cylindrical symmetry, $G_s$ can be further reduced to the following 1D \emph{inverse Hankel transform} format,
\begin{equation}\label{eq::Gs_1dform}
\begin{aligned}
   &G_s(\V{r}_i; \V{r}_j)\\
   =& \frac{1}{8\pi^2\epsilon_{\mrm{c}}}\int_{\sR^2} \frac{e^{\mathrm{i} \V{k} \cdot \V{\rho}_{ij}}}{k\left(1-\gamma_{\mrm{u}}\gamma_{\mrm{d}}e^{-2kL_z}\right)} \left(1-e^{-\frac{k^2}{4\alpha}}\right)\left(\sum_{p=1}^4\Gamma_pe^{-ka_p(z_i;z_j)}\right)\D \vk\;,\\
   =&\frac{1}{4\pi\epsilon_{\mrm{c}}}\int_0^{+\infty} \frac{ e^{-ka_p(z_i;z_j)}}{1-\gamma_{\mrm{u}}\gamma_{\mrm{d}}e^{-2kL_z}} \left(1-e^{-\frac{k^2}{4\alpha}}\right)\fJ_0(\rho_{ij}k)\D k\;,
\end{aligned}
\end{equation}
where $\rho_{ij}:=\Norm{\vrho_{ij}}=\Norm{\vrho_i-\vrho_j}$, and $\fJ_0(\cdot)$ is the zeroth-order Bessel function of the first kind, whose expression reads $\fJ_0(r):=\frac{1}{2\pi} \int_0^{2\pi} e^{\mathrm{i} r\sin \theta} \D \theta\;.$


Consider Eq.~\eqref{eq::Gs_1dform}, without loss of generality, we develop optimal quadrature rules for evaluating the following integral
\begin{equation}\label{eq...text...InfiniteIntegral}
\int_0^{+\infty} \frac{ e^{-ak}}{1-\gamma_{\mrm{u}}\gamma_{\mrm{d}}e^{-2kL_z}} \left(1-e^{-\frac{k^2}{4\alpha}}\right)\fJ_0(\rho k)\D k\;,    
\end{equation}
where $a>0$ and $\rho>0$ are constants. 
First, we split Eq.~\eqref{eq...text...InfiniteIntegral} into the summation of two terms,
\begin{align*}
   & \int_0^{+\infty} \frac{ e^{-ak}}{1-\gamma_{\mrm{u}}\gamma_{\mrm{d}}e^{-2kL_z}} \left(1-e^{-\frac{k^2}{4\alpha}}\right)\fJ_0(\rho k)\D k\\
   =&\underbrace{\int_0^{+\infty} \frac{1}{1-\gamma_{\mrm{u}}\gamma_{\mrm{d}}e^{-2kL_z}}  e^{-ak} \fJ_0(\rho k)\D k}_{\text{Term}~\mathrm{I}}-\underbrace{\int_0^{+\infty} \frac{1}{1-\gamma_{\mrm{u}}\gamma_{\mrm{d}}e^{-2kL_z}}e^{-\frac{k^2}{4\alpha}}  e^{-ak} \fJ_0(\rho k)\D k}_{\text{Term}~\mathrm{II}}\;.
\end{align*}
Clearly, Term II is   numerically more tractable, as its integrand contains the factor $e^{\frac{-k^2}{4\alpha}}$, which  decays rapidly with increasing 
$k$. 
For term I, we further decompose it into two terms,
\begin{align*}
&\int_0^{+\infty} \frac{1}{1-\gamma_{\mrm{u}}\gamma_{\mrm{d}}e^{-2kL_z}}  e^{-ak} \fJ_0(\rho k)\D k\\
=&\underbrace{\int_0^{+\infty}   e^{-ak} \fJ_0(\rho k)\D k}_{\text{Term}~\mathrm{I(a)}}+\underbrace{\int_0^{+\infty} \frac{\gamma_{\mrm{u}}\gamma_{\mrm{d}}}{1-\gamma_{\mrm{u}}\gamma_{\mrm{d}}e^{-2kL_z}} e^{-2kL_z}e^{-ak} \fJ_0(\rho k)\D k}_{\text{Term}~\mathrm{I(b)}}\;.
\end{align*}
Notice that $\left\{e^{-ax}, \frac{1}{\sqrt{a^2+1}}\right\}$ is a Hankel transform pair, namely, $\int_0^{+\infty}e^{-ax}\fJ_0(x)\D x=\frac{1}{\sqrt{a^2+1}}$, thus
Term $\mathrm{I(a)}$ can be obtained analytically, 
\begin{equation}\label{eq...text...TermI(A)}
    \int_0^{+\infty}   e^{-ak} \fJ_0(\rho k)\D k=\frac{1}{\sqrt{a^2+\rho^2}}\;.
\end{equation}
Although $\fJ_0(r)$ is slowly decaying and oscillatory as $r\to+\infty$, both Term $\mathrm{I(b)}$ and Term $\mathrm{II}$ contain rapidly decaying terms $e^{-\frac{k^2}{4\alpha}}$ and $e^{-k L_z}$, allowing efficient calculations using optimal quadrature rules proposed by Trefethen et al. in~\cite{trefethen2022exactness}.
For a given truncation parameter $M>0$, the infinite integral can be first approximated by truncation,
\begin{equation*}
 \int_0^{+\infty} F(k)\D  k \approx \int_0^{M} F(k)\D  k\;,
\end{equation*}
and then the integral on $[0, M]$ can be efficiently evaluated using  Gauss-Legendre quadrature. 
We further provide truncation error estimates to determine $M$ in practice, which are summarized in the following theorem.
\begin{theorem}\label{thm:truncation_error}
We denote the truncation errors of Term $\mathrm{I(b)}$ and Term $\mathrm{II}$ respectively by
\begin{align*}
  \Delta \mathrm{I}_b(M)&:=\int_M^{+\infty} \frac{\gamma_{\mrm{u}}\gamma_{\mrm{d}}}{1-\gamma_{\mrm{u}}\gamma_{\mrm{d}}e^{-2kL_z}}  e^{-2kL_z}e^{-ak} \fJ_0(\rho k)\D k \;,\\  
  \Delta \mathrm{II}(M)&:=\int_M^{+\infty}  \frac{1}{1-\gamma_{\mrm{u}}\gamma_{\mrm{d}}e^{-2kL_z}}e^{-\frac{k^2}{4\alpha}}  e^{-ak} \fJ_0(\rho k)\D k \;,
\end{align*}
whose estimates read
\begin{align}
    \Norm{\Delta \mathrm{I}_b(M)} &\leq  \frac{\Abs{\gamma_{\mrm{u}}\gamma_{\mrm{d}}}}{1-\max\{0,\gamma_{\mrm{u}}\gamma_{\mrm{d}}\}}\frac{1}{\sqrt{\rho L_z}}\mrm{erfc}(\sqrt{2L_zM})\;,\\
    \Norm{\Delta \mathrm{II}(M)} &\leq \frac{\sqrt{\pi\alpha} }{1-\max\{0,\gamma_{\mrm{u}}\gamma_{\mrm{d}}\}} \mrm{erfc}\left(\frac{M}{2\sqrt{\alpha}}\right)\;.
\end{align}
\end{theorem}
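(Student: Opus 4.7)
The plan is to bound each truncation error by the same three-step recipe: factor out the uniform bound on the geometric-series denominator, apply a pointwise bound on $\mathcal{J}_0(\rho k)$, and reduce the surviving integral to a complementary error function via a change of variables.

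First I would establish a uniform denominator bound. Since $|\gamma_{\mrm{u}}\gamma_{\mrm{d}}|<1$ and $e^{-2kL_z}\leq 1$ for all $k\geq 0$, one has $1-\gamma_{\mrm{u}}\gamma_{\mrm{d}}e^{-2kL_z}\geq 1-\max\{0,\gamma_{\mrm{u}}\gamma_{\mrm{d}}\}>0$, so the fraction $(1-\gamma_{\mrm{u}}\gamma_{\mrm{d}}e^{-2kL_z})^{-1}$ can be pulled out of both integrals, bounded by $(1-\max\{0,\gamma_{\mrm{u}}\gamma_{\mrm{d}}\})^{-1}$, on $[M,+\infty)$.

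For $\Delta\mathrm{II}(M)$ I would apply the trivial bound $|\mathcal{J}_0(\rho k)|\leq 1$ and discard $e^{-ak}\leq 1$ (which is valid since $a\geq 0$), reducing the task to estimating $\int_M^{+\infty} e^{-k^2/(4\alpha)}\,\D k$. The substitution $u=k/(2\sqrt{\alpha})$ converts this integral into $\sqrt{\pi\alpha}\,\mrm{erfc}(M/(2\sqrt{\alpha}))$, giving the desired bound. For $\Delta\mathrm{I}_b(M)$ the trivial Bessel bound would only yield exponential decay $\sim e^{-2L_zM}$ without the prefactor $1/\sqrt{\rho L_z}$; I therefore use the sharper classical inequality $|\mathcal{J}_0(x)|\leq \sqrt{2/(\pi x)}$ (which follows, e.g., from Krasikov's estimate $\mathcal{J}_0^2(x)+\mathcal{J}_1^2(x)\leq 2/(\pi x)$). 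Dropping $e^{-ak}$ again, one is left with
\begin{equation*}
\sqrt{\tfrac{2}{\pi\rho}}\int_M^{+\infty}\frac{e^{-2L_z k}}{\sqrt{k}}\,\D k,
\end{equation*}
and the substitution $u=\sqrt{2L_z k}$ (which gives $\D k/\sqrt{k}=\sqrt{2/L_z}\,\D u$) identifies this integral with $\sqrt{\pi/(2L_z)}\,\mrm{erfc}(\sqrt{2L_zM})$. Combining the geometric prefactors $\sqrt{2/(\pi\rho)}\cdot\sqrt{\pi/(2L_z)}=1/\sqrt{\rho L_z}$ and including the $|\gamma_{\mrm{u}}\gamma_{\mrm{d}}|$ factor from the numerator produces the claimed bound.

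The main obstacle is the appropriate choice of Bessel bound: naively using $|\mathcal{J}_0|\leq 1$ ruins the $\rho$-dependence of $\Delta\mathrm{I}_b(M)$, whereas the uniform estimate $|\mathcal{J}_0(x)|\leq \sqrt{2/(\pi x)}$ is the ingredient that both converts the bare exponential tail $e^{-2L_z k}$ into an $\mrm{erfc}(\sqrt{2L_z M})$ and supplies the geometric prefactor $1/\sqrt{\rho L_z}$ that correctly reflects the strongly-confined regime $\rho,L_z\to 0$. Everything else is routine: the error introduced by discarding $e^{-ak}$ is benign because it only makes the bound slightly larger, and the denominator estimate is tight at $k=0$ and only improves for larger $k$.
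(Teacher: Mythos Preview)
Your proposal is correct and follows essentially the same route as the paper: the paper likewise bounds the denominator by $(1-\max\{0,\gamma_{\mrm{u}}\gamma_{\mrm{d}}\})^{-1}$, uses $|\fJ_0(x)|\le\sqrt{2/(\pi x)}$ for $\Delta\mathrm{I}_b(M)$ and $|\fJ_0|\le 1$ for $\Delta\mathrm{II}(M)$, drops $e^{-ak}$, and evaluates the remaining integrals exactly as you do via the substitutions leading to the two $\mrm{erfc}$ expressions. Your write-up is in fact slightly more explicit than the paper's, which suppresses the denominator step and the change-of-variable details.
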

\begin{proof}
The proof relies on the properties of the special function $\fJ_0(r)$~\cite{landau2000bessel,olenko2006upper}, details are omitted here and can be found in Appendix. Note that $\mathrm{erfc}(\cdot)$ decays exponentially, ensuring fast convergence in $M$.
\end{proof}

\begin{figure}[htbp!]
    \centering
    \includegraphics[width = \linewidth]{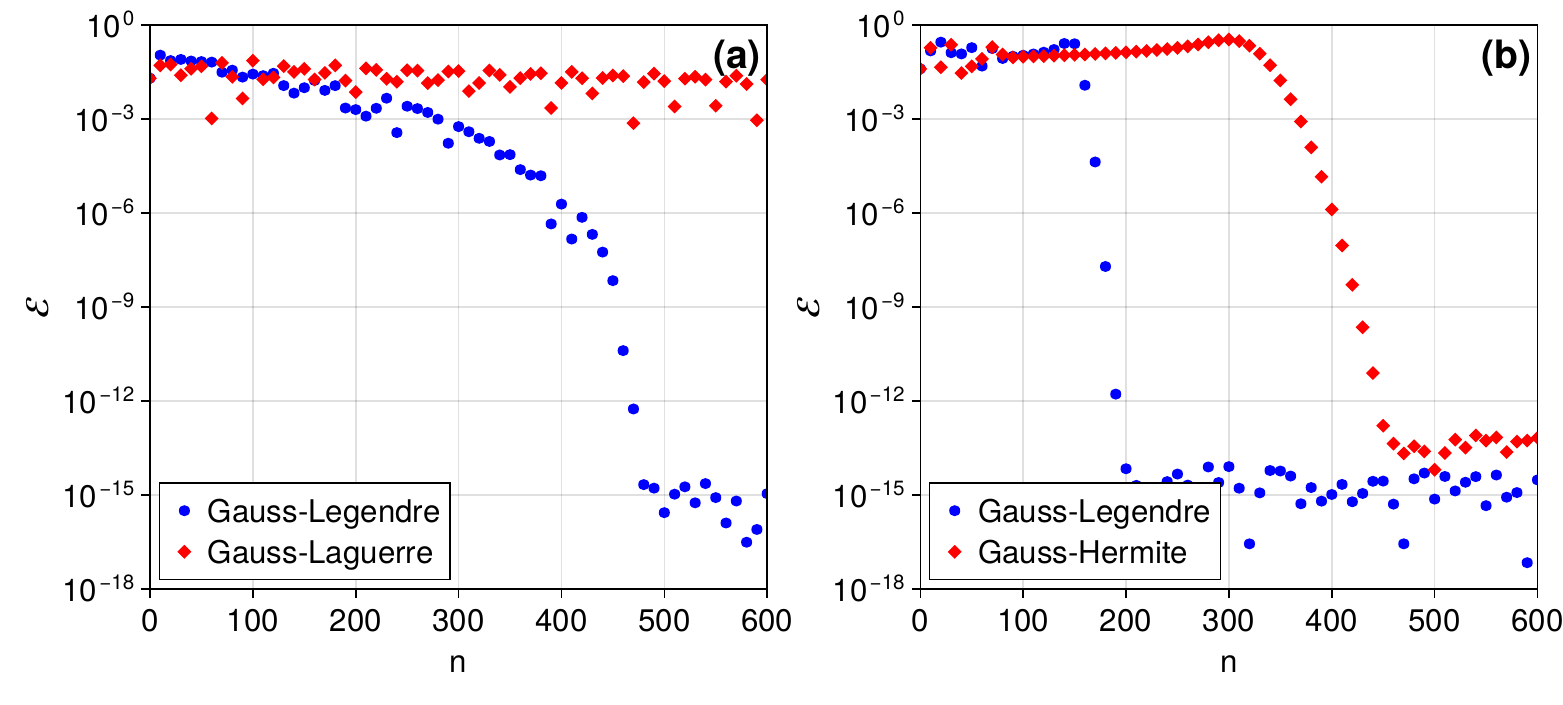}
    \caption{
 Absolute error $\mathcal E$ in the evaluation of 
        \(\displaystyle\int_0^{\infty} \fJ_0(50 x) e^{-x} \,\mathrm{d}x\) and 
        \(\displaystyle\int_{-\infty}^{\infty} \fJ_0(50 x) e^{-x^2} \,\mathrm{d}x\) 
        using (a). Gauss--Laguerre~(half interval), and (b). Gauss--Hermite~(full interval) and truncated 
        Gauss--Legendre   quadrature on finite intervals \([0,36]\) and \([-6,6]\), 
        where \(n\) denotes the order of  quadrature.
 }
    \label{fig:Gauss_int}
\end{figure}

\begin{figure}[htbp!]
    \centering
    \includegraphics[width = \linewidth]{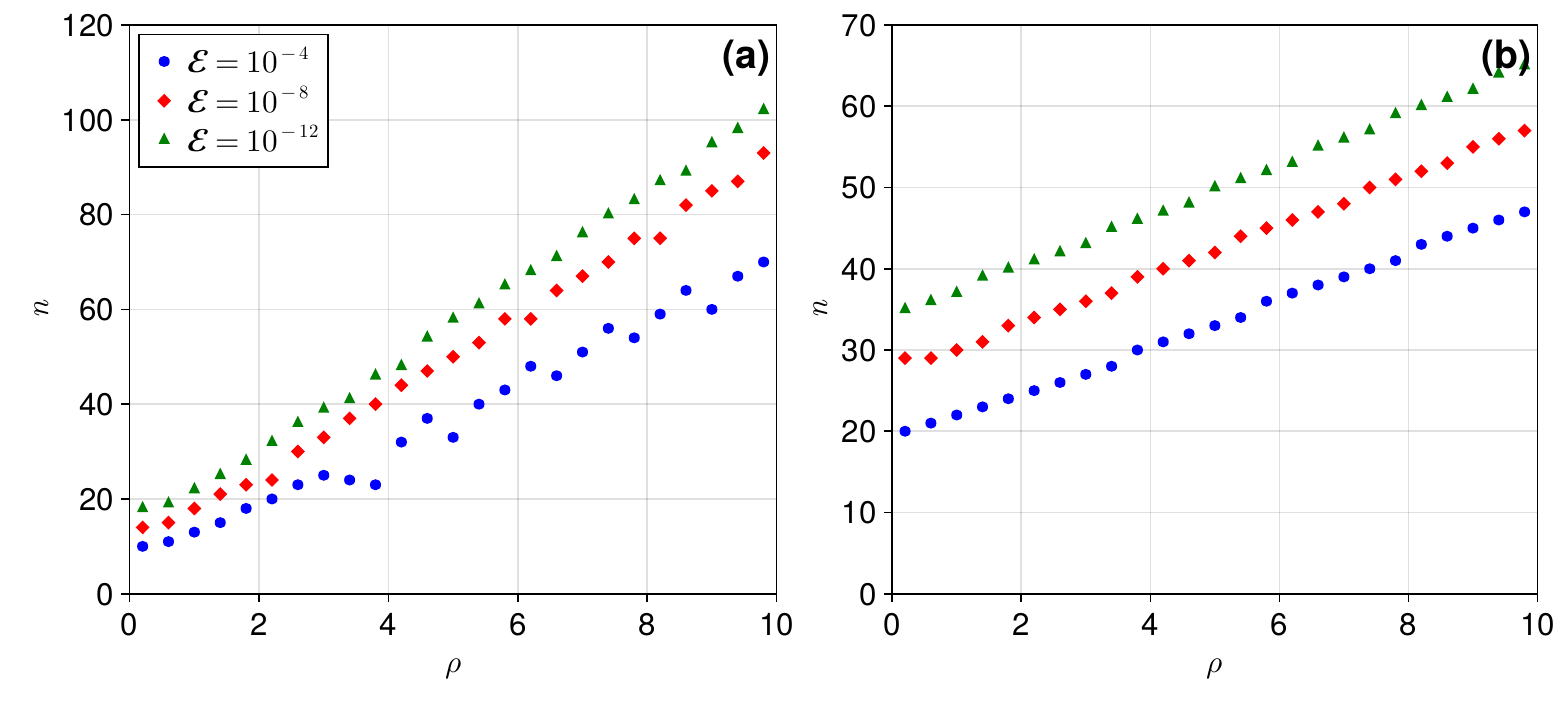}
    \caption{
    Required quadrature order \(n\) vs. oscillation parameter \(\rho \in [0,10]\) 
        for the  prescribed accuracy \(\mathcal{E}\), using truncated Gauss--Legendre 
        quadrature    for 
       (a) \(\int_0^\infty J_0(\rho x) e^{-x} \,\mathrm{d}x\)  on $[0, k_{a}]$; and 
         (b) \(\int_{-\infty}^\infty J_0(\rho x) e^{-x^2} \,\mathrm{d}x\) on~$[-k_{b}, k_{b}]$. And $k_{a}$ and~$k_{b}$ are determined based on Thm.~\ref{thm:truncation_error}.
    }
    \label{fig:Gauss_int_n}
\end{figure}

We validate the proposed quadrature rule by testing its convergence on two benchmark integrals involving the zeroth-order Bessel function of the first kind, i.e., $\int_0^{\infty} \fJ_0( \rho x)  e^{-x} \mrm{d}x$ and $\int_{-\infty}^{\infty} \fJ_0(\rho x) e^{-x^2} \mrm{d}x$. Furthermore, we  compare our results with standard Gauss-Laguerre and Gauss-Hermite quadratures, which are specifically tailored for integrals of these forms.
First, fixing~$\rho = 50$, the absolute error $\mathcal E$ versus the number of quadrature points $n$ is presented in Fig.~\ref{fig:Gauss_int}. Our proposed truncated Gauss–Legendre rule converges significantly faster than Gauss–Laguerre and Gauss–Hermite quadrature, consistent with the findings in \cite{trefethen2022exactness}.
Moreover,  for varying~$\rho \in [0, 10]$, we set the absolute error~$\mathcal E = 10^{-4}$, $10^{-8}$ and $10^{-12}$ (by choosing a suitable $M$ according to Thm.~\ref{thm:truncation_error}), respectively.  We examine the minimal number of quadrature points $n$ required for the proposed quadrature rule to achieve the desired accuracy.
As shown in Fig.~\ref{fig:Gauss_int_n}, only a few tens of points suffice for both integrals, demonstrating the efficiency of the method.

\section{Fast algorithms for long-range interactions}
Now we present efficient methods to calculate the long-range interaction energy $U_l$. 
First, based on the Poisson summation formula, we have
\begin{equation}\label{eq...text...PoissonSummation}
 \sum_{\vm} G_l(\V{r}_i; \V{r}_j+\vL_m)  =\frac{1}{ L_xL_y}   \sum_{\vk\in\fK^2} \Hat{G}_l(\vk,z_i;z_j)e^{\textrm{i}\vk\cdot \vrho}\;,
\end{equation}
where \[\fK^2 := \left\{\vk \in \frac{  2\pi}{L_x}\sZ\times \frac{  2\pi}{L_y}\sZ \right\}\;.\]
Substituting $\Hat{G}_l(\vk,z_i;z_j)$ derived in Cor.~\ref{corollary1}, we obtain the following formula for $U_l$,
\begin{equation}\label{eq:ul_detailed}
\begin{aligned}
U_l&=\frac{1}{2}\sum_{\vm}\sum_{i,j=1}^N q_iq_j     G_l(\V{r}_i, \V{r}_j+\vL_m)  \\
&=\frac{1}{4\epsilon_{\mrm{c}} L_xL_y}\sum_{i,j=1}^N  q_iq_j \Bigg\{{\Abs{z_i-z_j}} +   \\&~~~~+  \sum_{\vk\in\fK^2,\vk\neq\vzero} \frac{ e^{\mathrm{i} \V{k} \cdot \V{\rho}_{ij}}}{k\left(1-\gamma_{\mrm{u}}\gamma_{\mrm{d}}e^{-2kL_z}\right)}\left(\sum_{p=1}^4\Gamma_pe^{-ka_p(z_i;z_j)}\right) e^{-\frac{k^2}{4\alpha}}\Bigg\}\;.
\end{aligned}\end{equation}
To reduce the computational cost for Eq.~\eqref{eq:ul_detailed} to $\fO(N)$, we integrate the random batch sampling method with an efficient pairwise summation technique.
The details of our approach will be discussed in the remainder of this section.

\subsection{Part one: $\fO(N)$ pairwise summation method}
First, for simplicity, we define 
auxiliary functions $h_{1:4}(\vk)$ as 
\begin{align*}
 h_1(\vk)&:= \frac{1}{4\epsilon_{\mrm{c}} L_xL_yk\left(1-\gamma_{\mrm{u}}\gamma_{\mrm{d}}e^{-2kL_z}\right)} \;,\\
  h_2(\vk)&:= \frac{\gamma_{\mrm{d}}}{4\epsilon_{\mrm{c}} L_xL_yk\left(1-\gamma_{\mrm{u}}\gamma_{\mrm{d}}e^{-2kL_z}\right)} \;,\\
   h_3(\vk)&:= \frac{\gamma_{\mrm{u}}e^{-2kL_z}}{4\epsilon_{\mrm{c}} L_xL_yk\left(1-\gamma_{\mrm{u}}\gamma_{\mrm{d}}e^{-2kL_z}\right)}\;,\\
     h_4(\vk)&:= \frac{\gamma_{\mrm{u}}\gamma_{\mrm{d}}e^{-2kL_z}}{4\epsilon_{\mrm{c}} L_xL_yk\left(1-\gamma_{\mrm{u}}\gamma_{\mrm{d}}e^{-2kL_z}\right)}\;,
\end{align*}
and another set of auxiliary functions $S_{0:4}(\vk)$ as
\begin{align*} 
 S_0&:=\frac{1}{4\epsilon_{\mrm{c}} L_xL_y}\sum_{i,j=1}^Nq_iq_j {\Abs{z_i-z_j}}\;,\\
S_1(\vk)&:=\sum_{i,j=1}^Nq_iq_j  e^{\mathrm{i} \V{k} \cdot \V{\rho}_{ij}}e^{-k\Abs{z_i-z_j}}\;,\\
S_2(\vk)&:=\sum_{i,j=1}^Nq_iq_j  e^{\mathrm{i} \V{k} \cdot \V{\rho}_{ij}}e^{-k(z_i+z_j)}\;,\\
S_3(\vk)&:=\sum_{i,j=1}^Nq_iq_j  e^{\mathrm{i} \V{k} \cdot \V{\rho}_{ij}}e^{k(z_i+z_j)}\;,\\
S_4(\vk)&:=\sum_{i,j=1}^Nq_iq_j e^{\mathrm{i} \V{k} \cdot \V{\rho}_{ij}}e^{k\Abs{z_i-z_j}}\;.
\end{align*}
Consequently, the long-range interaction energy $U_l$ can be written as,
\begin{equation}\label{eq...text...Energy}
U_l =     S_0+ \sum_{\vk\in\fK^2,\vk\neq\vzero} e^{-\frac{k^2}{4\alpha}}\left(\sum_{p=1}^4S_p(\vk)h_p(\vk)\right)\;.
\end{equation}
Evaluating  $S_{0:4}(\vk)$  by direct summation would incur  an  $\fO(N^2)$ computational cost.
However, the auxiliary functions $S_2(\vk)$ and $S_3(\vk)$ are \emph{multiplicatively separable}; that is, they can be expressed as
\begin{align*}
S_2(\vk)&=\sum_{i,j=1}^Nq_iq_j  e^{\mathrm{i} \V{k} \cdot \V{\rho}_{ij}}e^{-k(z_i+z_j)}=\left(\sum_{i=1}^Nq_ie^{\mathrm{i} \V{k} \cdot \V{\rho}_{i}}e^{-kz_i}\right)\left(\sum_{j=1}^Nq_je^{-\mathrm{i} \V{k} \cdot \V{\rho}_{j}}e^{-kz_j}\right)\;,\\
S_3(\vk)&=\sum_{i,j=1}^Nq_iq_j  e^{\mathrm{i} \V{k} \cdot \V{\rho}_{ij}}e^{k(z_i+z_j)}=\left(\sum_{i=1}^Nq_i  e^{\mathrm{i} \V{k} \cdot \V{\rho}_{i}}e^{kz_i}\right)\left(\sum_{j=1}^Nq_j  e^{-\mathrm{i} \V{k} \cdot \V{\rho}_{j}}e^{kz_j}\right)\;,
\end{align*}
which reduces their computational complexity to $\fO(N)$.
In contrast, $ S_0$,  $S_1(\vk)$ and $S_4(\vk)$  are non-separable due to the presence of the absolute value term $\Abs{z_i - z_j}$, obstructing direct factorization.
Consequently, to eliminate the absolute value operation, we first sort all $N$ particles according to their positions along the $z$-axis, ensuring the order:
\begin{equation*}
   0 < z_1 < z_2 <  \cdots <  z_{N - 1}<z_N < L_z\;.
\end{equation*}
Since for all $i$, $z_i \in (0, L_z)$ is uniformly bounded within the simulation box, strategies such as the bucket sorting~\cite{corwin2004sorting} can be applied, and the computational cost can be of order~$\mathcal{O}(N)$. Especially considering actual particle-based simulations, only local corrections are needed for successive time steps.
Now, once all $N$ particles are sorted, $ S_0$ can be written as
\begin{align*}
     S_0 
    & = \sum_{i = 1}^N q_i \left( \sum_{j = 1}^{N}{q_j (z_i - z_j)} \right) = 2 \sum_{i = 1}^N q_i \left( z_i \sum_{j = 1}^{i }q_j - \sum_{j = 1}^{i }z_j q_j  \right) \;.
\end{align*}
To efficiently compute this sum, we further introduce two auxiliary sequences $\left\{T_1(i)\right\}_{i=1}^N$ and $\left\{T_2(i)\right\}_{i=1}^N$, both of which can be iteratively updated as follows:
\begin{align*}
    &T_1(1)=q_1,~~T_1(i+1)=T_1(i)+q_{i+1}\;,\\
    &T_2(1)=z_1q_1,~~T_2(i+1)=T_2(i)+z_{i+1}q_{i+1}\;.
\end{align*}
Finally, using these auxiliary sequences, $ S_0$ can be efficiently calculated as
\[
 S_0= 2 \sum_{i = 1}^N q_i \left( z_iT_1(i)-T_2(i)\right)\;,
\]
which requires only $\fO(N)$ operations.

A similar strategy~\cite{jiang2021approximating} can be applied for the evaluation of $S_1(\vk)$ and $S_4(\vk)$.
For $S_1(\vk)$, after sorting, we have that 
\begin{align*}
S_1(\vk)&=\sum_{i,j=1}^Nq_iq_j  e^{\mathrm{i} \V{k} \cdot \V{\rho}_{ij}}e^{-k\Abs{z_i-z_j}}\\
&=-\sum_{i=1}^Nq_i^2+\sum_{i=1}^N \sum_{j=1}^iq_iq_j  e^{\mathrm{i} \V{k} \cdot \V{\rho}_{ij}}e^{-k(z_i-z_j)}+\sum_{i=1}^N \sum_{j=i}^Nq_iq_j  e^{\mathrm{i} \V{k} \cdot \V{\rho}_{ij}}e^{-k(z_j-z_i)}\\
&= -\sum_{i=1}^Nq_i^2+\sum_{i=1}^N q_ie^{\mathrm{i} \V{k} \cdot \V{\rho}_{i}}e^{-kz_i}\sum_{j=1}^iq_je^{-\mathrm{i} \V{k} \cdot \V{\rho}_{j}}e^{kz_j}+\sum_{i=1}^N q_ie^{\mathrm{i} \V{k} \cdot \V{\rho}_{i}}e^{kz_i}\sum_{j=i}^Nq_je^{-\mathrm{i} \V{k} \cdot \V{\rho}_{j}}e^{-kz_j}\;.
\end{align*}
 We introduce another two auxiliary sequences  $\left\{T_3(i)\right\}_{i=1}^N$ and $\left\{T_4(i)\right\}_{i=1}^N$, and it is noteworthy that $\left\{T_3(i)\right\}_{i=1}^N$ is  constructed through forward iteration, whereas  $\left\{T_4(i)\right\}_{i=1}^N$   is generated through backward iteration:
\begin{align*}
    &T_3(1)=q_1e^{-\mathrm{i} \V{k} \cdot \V{\rho}_{1}}e^{kz_1}\;,\\
    &T_3(i+1)=T_3(i)+q_{i+1}e^{-\mathrm{i} \V{k} \cdot \V{\rho}_{i+1}}e^{kz_{i+1}}\;,\\
    &T_4(N)=q_Ne^{-\mathrm{i} \V{k} \cdot \V{\rho}_{N}}e^{-kz_N}\;,\\
    &T_4(i-1)=T_4(i)+q_{i-1}e^{-\mathrm{i} \V{k} \cdot \V{\rho}_{i-1}}e^{-kz_{i-1}}\;,
\end{align*}
and
$S_1(\vk)$ can be efficiently calculated as
\[
S_1(\vk)= -\sum_{i=1}^Nq_i^2+ \sum_{i=1}^N q_ie^{\mathrm{i} \V{k} \cdot \V{\rho}_{i}}\left[e^{-kz_i}T_3(i)+e^{kz_i}T_4(i)\right]\;.
\]
We observe that the same  recurrence scheme can be applied to the calculation of $S_4(\vk)$, and the details are omitted for brevity.

In summary, for any given $\vk$ in Eq.~\eqref{eq...text...Energy}, it now only takes $\fO(N)$ cost to evaluate $ S_0$ and $\sum_{p=1}^4S_p(\vk)h_p(\vk)$.
We proceed to introduce the random batch method to efficiently perform the summation over $\vk$. 
\subsection{Part two: random batch importance sampling in $k$-space}
In this section, we introduce the random batch method~\cite{jin2021random,jin2020random} for efficient summation over $\vk$, so that the overall computational complexity for $U_l$ can be reduced to $\mathcal{O}(N)$. 
The idea of random mini-batch sampling originates from optimization and machine learning, such as the well-known stochastic gradient descent (SGD) method. S. Jin et al. first proposed the random batch method (RBM) for interacting particle systems~\cite{jin2020random}.
This approach was subsequently extended to sampling in $k$-space using importance sampling for variance reduction. 
The resulting random batch Ewald method (RBE)~\cite{jin2021random, liang2021randomlist, liang2022superscalability, liang2022random, liang2022improved, liang2023random, liang2024energy, gan2025random} has since been successfully applied to various ensembles and systems, demonstrating remarkable superscalability in large-scale simulations.

Here, to further accelerate the computation for the long-range interaction energy $U_l$, we first define $\phi(\vk)$ as,
\begin{equation}\label{eq...text...phi}
    \phi(\vk):=\sum_{p=1}^4S_p(\vk)h_p(\vk)\;,
\end{equation}
then Eq.~\eqref{eq...text...Energy} becomes
\begin{equation} \label{eq...text...EnergywithPhi}
U_l =     S_0+ \sum_{\vk\in\fK^2,\vk\neq\vzero} \phi(\vk)e^{-\frac{k^2}{4\alpha}} \;.
\end{equation}
We define 
\begin{equation}\label{eq::U_l0_Ulk}
    U_l^{\vzero} := S_0,\quad\text{and}\quad U_l^{\vk\neq \vzero} :=\sum_{\vk\in\fK^2,\vk\neq\vzero}\phi(\vk) e^{-\frac{k^2}{4\alpha}} \;,
\end{equation}
then $U_l$ can be written as $U_l= U_l^{\vzero}+U_l^{\vk\neq \vzero}\;.$
Since 
$ U_l^{\vzero}$
can be computed with $\fO(N)$ cost, we focus on the efficient evaluation of $U_l^{\vk\neq \vzero}$ via random batch method. 

By understanding the lattice summation over $\vk$ as calculating an expectation value under the probability measure $\mathcal P(\vk)\propto e^{\frac{-k^2}{4\alpha}}$ defined on the lattice $\mathcal{K}^2$, we can build a stochastic approximation for $U_l^{\vk\neq \vzero}$ via importance sampling,
\begin{equation}\label{eq::RBapp}
 U_l^{\vk\neq \vzero} \approx U_{l,\ast}^{\vk\neq \vzero}  :=   \frac{H}{P}\sum_{\eta=1}^P \phi(\vk_{\eta})=  \frac{H}{P}\sum_{\eta=1}^P \sum_{q=1}^4S_q(\vk_{\eta})h_q(\vk_{\eta})\;,
\end{equation} 
where $\{\bm{k}_{\eta}\}_{\eta=1}^P$ are a mini-batch of frequencies sampled from $\mathcal P(\vk)$ using the Metropolis algorithm~\cite{metropolis1953equation}, where~$P$ is the batch size,
and \[H:= \sum_{\vk\in\fK^2,\vk\neq\vzero}e^{-\frac{k^2}{4\alpha}}\] the normalization constant for $\mathcal P$.
Then the corresponding stochastic estimator of forces in Fourier space can also be obtained as
\begin{equation}
	 \bm{F}_{l,i}\approx\bm{F}_{l,i}^*=-  \grad_{\V{r}_i} U_l^{\vzero}- \grad_{\V{r}_i} U_{l,\ast}^{\vk\neq \vzero}\;.
 \end{equation}

 In what follows, we present a theoretical analysis of the random batch approximation, emphasizing its convergence rate. We demonstrate that the batch size $P$ is independent of the particle number $N$, which ensures that the summation over $\vk$ does not affect the overall linear scaling.
Let us start with considering the fluctuations, i.e., the stochastic error introduced by the importance sampling at each time step. The fluctuations of the force acting on the $i$th particle due to random batch approximation, denoted as $\V{\chi}_{i}$, are defined as follows 
\begin{equation}\label{eq::xichi}
    \V{\chi}_{i} :=\vF_{l,i}^*-\vF_{l,i}=\grad_{\V{r}_i} U_{l,\ast}^{\vk\neq \vzero}-\grad_{\V{r}_i} U_{l}^{\vk\neq \vzero}\;.
\end{equation}
The unbiasedness of random batch approximation can be obtained straightforwardly, and is summarized in the following proposition.
\begin{proposition}\label{prop:unbaised}
	  $\vF_{l,i}^*$ is an unbiased estimator, i.e.,  $\mathbb{E} \V{\chi}_{i} = \bm 0$, and its variance  can be expressed as
\begin{align*}
		\mathbb{E}\Norm{\V{\chi}_{i}}^2=&\frac{H}{P}\sum_{\bm{k}\in\fK^2, \bm{k}\neq \vzero}e^{-\frac{\Norm{\vk}^2}{4\alpha}} 
        \Bigg|\Bigg|\nabla_{\bm{r}_i}\phi(\vk)
        -\frac{1}{H} \sum_{\bm{k}'\in\fK^2, \bm{k}'\neq \vzero}\nabla_{\bm{r}_i}\phi(\vk')e^{-\frac{\Norm{\vk'}^2}{4\alpha}}\Bigg|\Bigg|^2\;.
\end{align*}
\end{proposition}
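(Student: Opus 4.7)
My plan rests on the observation that $U_{l,\ast}^{\vk\neq\vzero}$ is an importance-sampled Monte Carlo approximation of $U_l^{\vk\neq\vzero}$ under the probability measure $\mathcal{P}(\vk) = e^{-k^2/(4\alpha)}/H$ on $\fK^2 \setminus \{\vzero\}$. Indeed, by the definition of $H$,
\[
U_l^{\vk\neq\vzero} \;=\; \sum_{\vk\in\fK^2,\,\vk\neq\vzero}\phi(\vk)\,e^{-\frac{k^2}{4\alpha}} \;=\; H\cdot\mathbb{E}_{\vk\sim\mathcal{P}}[\phi(\vk)],
\]
while $U_{l,\ast}^{\vk\neq\vzero} = \frac{H}{P}\sum_{\eta=1}^P \phi(\vk_\eta)$ is $H$ times the empirical mean of $\phi$ over $P$ i.i.d.\ draws from $\mathcal{P}$. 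Both claims of the proposition then reduce to the standard first- and second-moment identities for such an estimator, applied componentwise after differentiation in $\vr_i$.

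For the unbiasedness claim, linearity of expectation and the i.i.d.\ property of $\{\vk_\eta\}_{\eta=1}^P$ give $\mathbb{E}[U_{l,\ast}^{\vk\neq\vzero}] = U_l^{\vk\neq\vzero}$. Since the outer sum over $\eta$ is finite and $\phi$ depends smoothly on the particle positions through $S_{1:4}$, one may interchange $\grad_{\vr_i}$ with $\mathbb{E}[\cdot]$, which yields $\mathbb{E}[\grad_{\vr_i} U_{l,\ast}^{\vk\neq\vzero}] = \grad_{\vr_i} U_l^{\vk\neq\vzero}$ and hence $\mathbb{E}\V{\chi}_i = \vzero$. For the variance, I would rewrite
\[
\V{\chi}_i \;=\; \frac{H}{P}\sum_{\eta=1}^P\bigl(\grad_{\vr_i}\phi(\vk_\eta) - \mathbb{E}[\grad_{\vr_i}\phi(\vk_\eta)]\bigr),
\]
so that independence of the samples eliminates all cross terms and gives $\mathbb{E}\Norm{\V{\chi}_i}^2 = \frac{H^2}{P}\,\mathrm{Var}\bigl(\grad_{\vr_i}\phi(\vk_1)\bigr)$. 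Expanding the single-sample variance via $\mathrm{Var}(X) = \mathbb{E}\Norm{X - \mathbb{E} X}^2$, substituting $\mathcal{P}(\vk) = e^{-k^2/(4\alpha)}/H$, and cancelling one power of $H$ produces exactly the stated formula, with the prefactor $H/P$ and the inner normalization $1/H$ in front of the mean appearing naturally.

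The argument is a textbook computation for i.i.d.\ importance sampling with a self-normalizing constant, so I do not anticipate a substantive obstacle. The two points meriting brief remark are the interchange of $\grad_{\vr_i}$ and $\mathbb{E}[\cdot]$, which is immediate because the random sum has only $P$ terms and $\phi$ is a smooth function of the $\vr_i$'s, and the sign convention in the definition of $\V{\chi}_i$ induced by $\vF_{l,i} = -\grad_{\vr_i} U_l^{\vk\neq\vzero}$, which is inconsequential after taking the squared norm. I would present both steps briefly and then read off the final variance expression.
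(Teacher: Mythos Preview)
Your proposal is correct and matches the paper's intent: the paper states the proposition without proof, remarking only that unbiasedness ``can be obtained straightforwardly,'' and your argument is precisely the standard i.i.d.\ importance-sampling computation the authors are leaving to the reader. The decomposition $\V{\chi}_i = \frac{H}{P}\sum_\eta(\nabla_{\vr_i}\phi(\vk_\eta)-\mathbb{E}[\nabla_{\vr_i}\phi(\vk_\eta)])$, the cancellation of cross terms by independence, and the substitution $\mathcal{P}(\vk)=e^{-k^2/(4\alpha)}/H$ are exactly what is needed, and your remarks on the sign convention and the trivial interchange of gradient and finite expectation are appropriate.
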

Furthermore, under the Debye-H$\ddot{\text{u}}$ckel  (DH) approximation~\cite{levin2002electrostatic}, we have the following~Lemma~\ref{lem::upper_bound_phiRB}, whose proof can be found in the Supplementary Materials.
\begin{lemma} \label{lem::upper_bound_phiRB}
  Under the assumption of the DH theory, given function of the form 
    \begin{equation}
       \mathscr{G}(\V{r}_i)=\sum_{j\neq i}q _i q_{j}e^{\mathrm{i} \V{k} \cdot \V{\rho}_{ij}}f(z_{ij}),
    \end{equation}
 where $\Abs{f(z_{ij})}$ is bounded by  a constant $C_f$ independent of $z_{ij}$, then the function $ \mathscr{G}(\V{r}_i)$ is bounded above by
    \begin{equation}
         \mathscr{G}(\V{r}_i)\leq q_i  C_f\lambda_{\mrm{D}}^2\;,
    \end{equation}
    where~$\lambda_{\mrm{D}}$  represents the Debye length.
\end{lemma}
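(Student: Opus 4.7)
The natural strategy is to invoke the DH mean-field description of the ionic environment surrounding the central ion at $\V{r}_i$, which replaces the discrete sum over the remaining charges by a continuum integral against a screened charge density whose spatial extent is set by the Debye length $\lambda_{\mrm{D}}$. Combined with the trivial bounds $|e^{\mathrm{i}\V{k}\cdot\V{\rho}}|=1$ and $|f(z)|\leq C_f$, the problem is reduced to evaluating an elementary $L^1$-norm of the DH kernel, from which the $\lambda_{\mrm{D}}^2$ factor appears through integration over the radial profile.

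\textbf{Step 1 (DH replacement).} Under the linearized Poisson--Boltzmann equation, the average charge density induced by a central ion of charge $q_i$ in a bulk neutral electrolyte is the radial function
\begin{equation*}
\rho_i(\V{r}) = -q_i\,\frac{\kappa^2}{4\pi}\,\frac{e^{-\kappa\|\V{r}-\V{r}_i\|}}{\|\V{r}-\V{r}_i\|},\qquad \kappa:=\lambda_{\mrm{D}}^{-1}.
\end{equation*}
Under the DH approximation, $\rho_i$ is precisely the ensemble average of $\sum_{j\neq i} q_j\,\delta(\V{r}-\V{r}_j)$ once the uniform background cancels by charge neutrality. Substituting this replacement into the definition of $\mathscr{G}$ gives
\begin{equation*}
\mathscr{G}(\V{r}_i) \;\approx\; q_i\int_{\mathbb{R}^3}\rho_i(\V{r})\,e^{\mathrm{i}\V{k}\cdot(\V{\rho}-\V{\rho}_i)}\,f(z-z_i)\,\mrm{d}\V{r}.
\end{equation*}

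\textbf{Step 2 (Bounding the oscillating and bounded factors).} Applying the triangle inequality together with $|e^{\mathrm{i}\V{k}\cdot\V{\rho}}|=1$ and the hypothesis $|f(z)|\leq C_f$ reduces the estimate to
\begin{equation*}
|\mathscr{G}(\V{r}_i)| \;\leq\; q_i\,C_f\int_{\mathbb{R}^3}|\rho_i(\V{r})|\,\mrm{d}\V{r}.
\end{equation*}

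\textbf{Step 3 (Closed-form radial integration).} Passing to spherical coordinates and using the elementary identity $\int_0^\infty r\,e^{-\kappa r}\,\mrm{d}r=\kappa^{-2}=\lambda_{\mrm{D}}^2$, the $L^1$-norm of $\rho_i$ collapses to a constant multiple of $\lambda_{\mrm{D}}^2$. Absorbing the remaining charge/density prefactor into the normalization convention adopted for the DH kernel produces the asserted bound $\mathscr{G}(\V{r}_i)\leq q_i\,C_f\,\lambda_{\mrm{D}}^2$.

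The main technical obstacle is justifying the \emph{bulk} DH replacement in the present nanoconfined setting: the dielectric mismatches at $z=0$ and $z=L_z$ dress the screened charge cloud with image-charge contributions that deform its profile, particularly in the $z$-direction. A careful argument should therefore either show that the confinement-corrected density is dominated pointwise in modulus by the bulk DH kernel so that the $L^1$ bound above is preserved, or control the image-charge corrections directly and verify that they remain uniformly bounded in $\gamma_{\mrm{u}},\gamma_{\mrm{d}}$ and $L_z$. A secondary and more delicate point is that the crude estimate $|e^{\mathrm{i}\V{k}\cdot\V{\rho}}|=1$ discards all oscillation in $\V{k}$; a sharper $\V{k}$-dependent version could be derived by Hankel-transforming $\rho_i$ in the $\V{\rho}$-plane, but the stated inequality only requires the rougher estimate above.
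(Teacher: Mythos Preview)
Your proposal follows essentially the same route as the paper: invoke the DH continuum picture to replace the discrete sum $\sum_{j\neq i} q_j(\cdots)$ by an integral against the screened charge density surrounding $\V{r}_i$, bound $|e^{\mathrm{i}\V{k}\cdot\V{\rho}}|\le 1$ and $|f|\le C_f$, and reduce to the radial integral $\int r\,e^{-\kappa r}\,\mathrm{d}r=\lambda_{\mrm{D}}^2$. The paper differs only in that it retains the hard-sphere exclusion radius $r_a$ in the DH kernel (integrating over $\mathbb{R}^3\setminus B(\V{r}_i;r_a)$ with the matching normalization $e^{\kappa r_a}/(1+\kappa r_a)$) whereas you take the point-ion limit $r_a\to 0$; the resulting bound is identical, and your closing caveat about using the bulk DH profile in a confined geometry mirrors the paper's own concluding remark.
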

Finally, we introduce the following Theorem~\ref{thm:unbaised} for the boundedness and convergence in the fluctuations originated from the random batch approximation.
\begin{theorem}\label{thm:unbaised}
	Under the assumption of the DH theory,   the variance  of the estimators of   forces has a closed upper bound
	\begin{equation}
		  \mathbb{E}\Norm{\V{\chi}_{i}}^2\leq\frac{ H^2q_i^2\lambda_{\mrm{D}}^4}{2P\epsilon_{\mrm{c}}^2 L_x^2L_y^2}\left( \frac{1+\Abs{\gamma_{\mrm{u}}}+\Abs{\gamma_{\mrm{d}}}+\Abs{\gamma_{\mrm{u}}\gamma_{\mrm{d}}}}{1-\max\{0,\gamma_{\mrm{u}}\gamma_{\mrm{d}}\}}\right)^2\;,
	\end{equation}
     where~$\lambda_{\mrm{D}}$  represents the Debye length.
\end{theorem}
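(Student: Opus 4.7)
The plan is to reduce the variance estimate to a $\vk$-independent pointwise bound on $\|\nabla_{\vr_i}\phi(\vk)\|$, after which the theorem follows directly from Proposition~\ref{prop:unbaised}. The first step is to discard the non-negative squared-mean term in Proposition~\ref{prop:unbaised}, yielding $\mathbb{E}\|\V{\chi}_i\|^2 \leq (H/P) \sum_{\vk\neq\vzero} e^{-k^2/(4\alpha)} \|\nabla_{\vr_i}\phi(\vk)\|^2$. Since $\sum_{\vk\neq\vzero}e^{-k^2/(4\alpha)}=H$, it suffices to establish a uniform bound $\|\nabla_{\vr_i}\phi(\vk)\|\leq C_{\mrm{u}}\,|q_i|\lambda_{\mrm{D}}^2$, with $C_{\mrm{u}}$ depending only on $L_x, L_y, \epsilon_{\mrm{c}}, \gamma_{\mrm{u}}, \gamma_{\mrm{d}}$; this is where Lemma~\ref{lem::upper_bound_phiRB} and the Debye--H\"uckel assumption enter.

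To obtain this uniform bound, I will rewrite each product $S_p(\vk)h_p(\vk)$ so that every $z$-exponential becomes non-positive on the slab $[0,L_z]$. Concretely, the growing factors $e^{k(z_i+z_j)}$ in $S_3$ and $e^{k|z_i-z_j|}$ in $S_4$ are absorbed into the $e^{-2kL_z}$ prefactors of $h_3$ and $h_4$, producing tame replacements $e^{-k(2L_z-z_i-z_j)}$ and $e^{-k(2L_z-|z_i-z_j|)}$ of modulus at most one. This also yields clean prefactor estimates $|\tilde h_p(\vk)| \leq c_p/[4\epsilon_{\mrm{c}}L_xL_yk(1-\max\{0,\gamma_{\mrm{u}}\gamma_{\mrm{d}}\})]$ with $c_p\in\{1,|\gamma_{\mrm{d}}|,|\gamma_{\mrm{u}}|,|\gamma_{\mrm{u}}\gamma_{\mrm{d}}|\}$, using $|1-\gamma_{\mrm{u}}\gamma_{\mrm{d}}e^{-2kL_z}|\geq 1-\max\{0,\gamma_{\mrm{u}}\gamma_{\mrm{d}}\}$. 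Taking $\nabla_{\vr_i}$ componentwise then brings down a factor of magnitude at most $k$ (either $|k_x|, |k_y|$ from the planar phase or $k$ from the tame $z$-exponentials), with the surviving function of $(z_i,z_j)$ still bounded by one. Applying Lemma~\ref{lem::upper_bound_phiRB} Cartesian-componentwise to $\nabla_{\vr_i}\tilde S_p$ yields $\|\nabla_{\vr_i}(S_ph_p)\| \lesssim |\tilde h_p|\cdot k\,|q_i|\lambda_{\mrm{D}}^2$. The crucial cancellation is that this factor of $k$ exactly offsets the $1/k$ inside $\tilde h_p$, rendering the per-$p$ bound $\vk$-independent; a triangle inequality over $p=1,\dots,4$ then assembles the numerator $1+|\gamma_{\mrm{u}}|+|\gamma_{\mrm{d}}|+|\gamma_{\mrm{u}}\gamma_{\mrm{d}}|$.

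The main obstacle I anticipate is the careful bookkeeping of the four $p$-contributions: the decomposition $S_ph_p = \tilde h_p \tilde S_p$ must be performed so that each $\tilde f_p(z_i,z_j)$ is uniformly bounded by one on the slab, and so that the Lemma~\ref{lem::upper_bound_phiRB} hypothesis of a kernel bound independent of $z_{ij}$ is genuinely met for every Cartesian component of the gradient. Particular care is needed at the non-smooth absolute values in $\tilde f_1$ and $\tilde f_4$, where differentiation contributes only a sign and preserves magnitude. Once this decomposition is properly set up, squaring the assembled bound, multiplying by the $k$-space weight $\sum_{\vk\neq\vzero}e^{-k^2/(4\alpha)}=H$, and the external $H/P$ factor together deliver the stated inequality up to an absolute numerical constant.
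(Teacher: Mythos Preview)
Your plan is correct and follows essentially the same route as the paper: reduce the variance to a uniform-in-$\vk$ bound on $\|\nabla_{\vr_i}\phi(\vk)\|$, absorb the $e^{-2kL_z}$ factors into the growing $z$-exponentials so that every $z$-kernel is bounded by one, invoke Lemma~\ref{lem::upper_bound_phiRB}, and exploit the cancellation of the extra $k$ from the gradient against the $1/k$ in $h_p$. Your variance-to-second-moment step (dropping the mean) is in fact slightly tighter than the paper's use of a crude $2H/P$ prefactor, so the constant you obtain will be no worse than the one stated in the theorem.
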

\begin{proof}
We observe that 
\begin{align*}
 \nabla_{\bm{r}_i}\phi(\vk)&=   \sum_{q=1}^4 \nabla_{\bm{r}_i}\left\{S_q(\vk)\right\}h_q(\vk)\;,
\end{align*}
thus we have,
\begin{align*}
 \nabla_{\bm{r}_i} S_1(\vk)&=\sum_{j,j\neq i}q_iq_j   \begin{pmatrix} \mathrm{i} \V{k} \\
 \pm k  \end{pmatrix}e^{\mathrm{i} \V{k} \cdot \V{\rho}_{ij}}e^{-k\Abs{z_i-z_j}}\;,\\
 \nabla_{\bm{r}_i} S_2(\vk)&=\sum_{j,j\neq i}q_iq_j   \begin{pmatrix} \mathrm{i} \V{k} \\
 -k  \end{pmatrix}e^{\mathrm{i} \V{k} \cdot \V{\rho}_{ij}}e^{-k(z_i+z_j)}\;,\\
  \nabla_{\bm{r}_i} S_3(\vk)&=\sum_{j,j\neq i}q_iq_j   \begin{pmatrix} \mathrm{i} \V{k} \\
 k  \end{pmatrix}e^{\mathrm{i} \V{k} \cdot \V{\rho}_{ij}}e^{k(z_i+z_j)}\;,\\
 \nabla_{\bm{r}_i} S_4(\vk)&=\sum_{j,j\neq i}q_iq_j   \begin{pmatrix} \mathrm{i} \V{k} \\
 \pm k  \end{pmatrix}e^{\mathrm{i} \V{k} \cdot \V{\rho}_{ij}}e^{k\Abs{z_i-z_j}}\;,\\
\end{align*}
and  the choice of $+$ or $-$ depends on the choice of particle. Consequently, $\grad_{\V{r}_i} \phi(\V{k})$ takes the form
\begin{align*}
  \grad_{\V{r}_i} \phi(\V{k})  &=\sum_{j,j\neq i}q_iq_j  e^{\mathrm{i} \V{k} \cdot \V{\rho}_{ij}} \begin{pmatrix} \mathrm{i} \V{k} \\
 \pm k  \end{pmatrix}g(z_{ij})\;,
\end{align*}
where the estimates on $g(z_{ij})$ read
\begin{align*}
  \Abs{g(z_{ij})  }&\leq  \frac{1+\Abs{\gamma_{\mrm{u}}}+\Abs{\gamma_{\mrm{d}}}+\Abs{\gamma_{\mrm{u}}\gamma_{\mrm{d}}}}{4\epsilon_{\mrm{c}} L_xL_y k\left(1-\max\{0,\gamma_{\mrm{u}}\gamma_{\mrm{d}}\}\right)}\;,
\end{align*}
On the basis of Lemma \ref{lem::upper_bound_phiRB}, the upper bound of~$\Norm{\grad_{\V{r}_i} \phi(\V{k})}$ can be estimated as follows 
\begin{equation}
  \Norm{\nabla_{\bm{r}_i}\phi(\vk)}\leq \frac{q_i\lambda_{\mrm{D}}^2}{2\epsilon_{\mrm{c}} L_xL_y}    \frac{1+\Abs{\gamma_{\mrm{u}}}+\Abs{\gamma_{\mrm{d}}}+\Abs{\gamma_{\mrm{u}}\gamma_{\mrm{d}}}}{1-\max\{0,\gamma_{\mrm{u}}\gamma_{\mrm{d}}\}}\;.
\end{equation}
Finally, the variance of the force on the $i$-th particle is bounded above by
\begin{align*}
   	\mathbb{E}\Norm{\V{\chi}_{i}}^2&\leq   \frac{2H}{P}\sum_{\bm{k}\in\fK^2, \bm{k}\neq \vzero}e^{-\frac{\Norm{\vk}^2}{4\alpha}}\Norm{\nabla_{\bm{r}_i}\phi(\vk)}^2\\
    &\leq  \frac{2H^2}{P}\left(\frac{q_i\lambda_{\mrm{D}}^2}{2\epsilon_{\mrm{c}} L_xL_y}    \frac{1+\Abs{\gamma_{\mrm{u}}}+\Abs{\gamma_{\mrm{d}}}+\Abs{\gamma_{\mrm{u}}\gamma_{\mrm{d}}}}{1-\max\{0,\gamma_{\mrm{u}}\gamma_{\mrm{d}}\}}\right)^2\\
    &= \frac{ H^2q_i^2\lambda_{\mrm{D}}^4}{2P\epsilon_{\mrm{c}}^2 L_x^2L_y^2}\left( \frac{1+\Abs{\gamma_{\mrm{u}}}+\Abs{\gamma_{\mrm{d}}}+\Abs{\gamma_{\mrm{u}}\gamma_{\mrm{d}}}}{1-\max\{0,\gamma_{\mrm{u}}\gamma_{\mrm{d}}\}}\right)^2\;.
\end{align*}
\end{proof}
We note that $H\sim L_xL_y$ and $\mathbb{E}\Norm{\V{\chi}_{i}}^2=\fO\left(\frac{1}{P}\right)$, which is \emph{independent} of the particle number $N$.
Next, we demonstrate that such a random batch approximation for forces can capture the finite time dynamics of the Langevin thermostat, i.e., Eq.~\eqref{eq:NewtonEquation}.
Specifically, consider a numerical integration of Eq.~\eqref{eq:NewtonEquation} with $\Delta t$ the discretized time step, $\bm{r}_i$, $m_i$ and $\bm{p}_i$ the position, mass, and momentum of the $i$th particle, respectively. We have the following strong convergence result.
\begin{theorem}(Strong convergence)\label{thm:strong_conv}
Let $(\bm{r}_i, \bm{v}_i)$ be the solutions to
\begin{equation*}
\begin{split}
&\D\bm{r}_i=\bm{v}_i\,\D t\;,\\
&m_i \D\bm{v}_i=\left[\bm{F}_i-\gamma \bm{v}_i\right]\,\D t+\sqrt{\frac{2\gamma}{\beta}}\D \bm{W}_i\;,
\end{split}
\end{equation*}
where $\{\bm{W}_i\}_{i=1}^N$ are i.i.d. Wiener processes. Let $(\widetilde{\bm{r}}_i, \widetilde{\bm{v}}_i)$ be the solutions to
\begin{equation*}
\begin{split}
&\D \widetilde{\bm{r}}_i=\widetilde{\bm{v}}_i\,\D t\;,\\
&m_i \D\widetilde{\bm{v}}_i=\left[\bm{F}_i +\bm{\chi}_i-\gamma \widetilde{\bm{v}}_i\right]\,\D t+\sqrt{\frac{2\gamma}{\beta}}\D \bm{W}_i\;,
\end{split}
\end{equation*}
with the same initial values as $(\bm{r}_i, \bm{v}_i)$. Suppose that the masses $m_i$'s are bounded uniformly from above and below.
If the forces $\bm{F}_i$ are bounded and Lipschitz and $\Exp \bm{\chi}_i=0$, then for any $T>0$, there exists $C(T)>0$ such that
\[
\Exp\left[\frac{1}{N}\sum_i (\Norm{\bm{r}_i-\widetilde{\bm{r}}_i}^2
+\Norm{\bm{v}_i-\widetilde{\bm{v}}_i}^2) \right] \leq C(T)\sqrt{\Lambda \Delta t}\;,
\]
where $\Lambda$ is an upper bound for $\max_i  \Exp\Norm{\bm{\chi}_i}^2$ and is independent of $N$.
\end{theorem}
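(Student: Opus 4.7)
The plan is to introduce the error processes $\bm{e}_i^r(t) := \bm{r}_i(t) - \widetilde{\bm{r}}_i(t)$ and $\bm{e}_i^v(t) := \bm{v}_i(t) - \widetilde{\bm{v}}_i(t)$. Since both SDE systems are driven by the \emph{same} Wiener processes $\{\bm{W}_i\}$ and start from identical initial data, the stochastic integrators cancel on subtraction and the errors satisfy a pathwise ODE,
\begin{align*}
\D \bm{e}_i^r &= \bm{e}_i^v\,\D t\;,\\
m_i\,\D \bm{e}_i^v &= \bigl[\bm{F}_i(\bm{r}) - \bm{F}_i(\widetilde{\bm{r}})\bigr]\,\D t \;-\; \bm{\chi}_i\,\D t \;-\; \gamma\,\bm{e}_i^v\,\D t\;,
\end{align*}
with zero initial condition. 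The randomness in the error thus enters only through the piecewise-constant process $\bm{\chi}_i(t) = \bm{\chi}_i^{(n)}$ on $[t_n,t_{n+1})$, which is mean-zero conditional on $\mathcal{F}_{t_n}$ and satisfies $\Exp\Norm{\bm{\chi}_i^{(n)}}^2 \leq \Lambda$.

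Next I would consider the energy functional $V(t) := \frac{1}{N}\sum_i (\Norm{\bm{e}_i^r}^2 + \Norm{\bm{e}_i^v}^2)$, differentiate it along the error ODE, and apply Young's inequality together with the uniform mass bounds $m_{\min}\leq m_i \leq m_{\max}$ and the Lipschitz constant $L_F$ of the forces. Absorbing the damping and the Lipschitz cross terms into a linear-in-$V$ contribution, one arrives at a differential inequality of the form
\begin{equation*}
\frac{\D}{\D t}\Exp V(t) \;\leq\; C_1\,\Exp V(t) \;+\; \frac{2}{m_{\min}\,N}\sum_i \Exp\!\bigl[-\bm{\chi}_i\cdot \bm{e}_i^v\bigr]\;.
\end{equation*}
The deterministic part is standard Gronwall material; the crux is the cross term $\Exp[\bm{\chi}_i\cdot \bm{e}_i^v]$, which couples the fluctuation to the error through the ODE itself.

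To handle the cross term I would use the $\mathcal{F}_{t_n}$-measurable splitting that is characteristic of random-batch convergence proofs: for $t\in[t_n,t_{n+1})$, write $\bm{e}_i^v(t) = \bm{e}_i^v(t_n) + \int_{t_n}^t \D\bm{e}_i^v$. The first piece is $\mathcal{F}_{t_n}$-measurable, so by the tower property and unbiasedness $\Exp[\bm{\chi}_i(t)\mid \mathcal{F}_{t_n}] = \bm 0$ its contribution vanishes. The remainder is controlled by integrating the velocity equation and using Cauchy--Schwarz, producing a bound of order $\Delta t$ times $\Norm{\bm{\chi}_i^{(n)}}$ plus Lipschitz pieces already proportional to $V$. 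A further AM--GM step then yields
\begin{equation*}
\frac{2}{m_{\min}\,N}\sum_i \Exp\!\bigl[-\bm{\chi}_i\cdot \bm{e}_i^v\bigr] \;\leq\; \varepsilon\,\Exp V(t) + C_\varepsilon\,\Lambda\,\Delta t\;,
\end{equation*}
after which Gronwall's lemma on $[0,T]$ closes the estimate and produces a bound proportional to $\Lambda \Delta t$, from which the stated $\sqrt{\Lambda \Delta t}$ rate follows.

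The main obstacle, and the only genuinely non-routine step, is precisely the last one: isolating the zero-mean cancellation afforded by $\Exp\bm{\chi}_i = \bm 0$ despite the fact that $\bm{\chi}_i$ and $\bm{e}_i^v$ are not independent on the sub-interval $[t_n,t_{n+1})$. The $\mathcal{F}_{t_n}$-measurable decomposition is the standard device for this, adapted from the RBM convergence analysis of Jin, Li, and Liu and its subsequent extensions to Langevin-type dynamics; the boundedness and Lipschitz assumptions on $\bm{F}_i$, together with the uniform mass bounds, are exactly what is needed to make the remainder term $\fO(\Delta t)$ in the appropriate $L^2$ sense.
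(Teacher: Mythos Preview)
The paper does not actually prove this theorem: immediately after the statement it writes ``Similar proofs can be found in~[citations], and thus, we omit the proof here.'' Your outline is precisely the standard random-batch convergence argument from those references (Jin--Li--Liu and successors), so in that sense you are aligned with what the authors intend.

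One small point worth flagging: your Gronwall step yields $\Exp V(t)\leq C(T)\,\Lambda\,\Delta t$, which is in fact \emph{stronger} than the theorem's stated bound $C(T)\sqrt{\Lambda\Delta t}$ (for $\Lambda\Delta t\leq 1$). Your closing sentence ``from which the stated $\sqrt{\Lambda\Delta t}$ rate follows'' is therefore correct but only by weakening; the $\sqrt{\cdot}$ in the paper's statement appears to be either a conservative choice or a transcription of the rate for the unsquared error from the cited works. Your argument does not lose anything here---just be aware that the square root is not something you need to manufacture.
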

Similar proofs can be found in~\cite{li2019stochastic,jin2021convergence,ye2024error}, and thus, we omit the proof here.
It is important to note, however, that in practical molecular dynamics (MD) simulations, both electrostatic and Lennard-Jones interactions can violate the required Lipschitz continuity and boundedness conditions as $r \to 0$. 
 Consequently, rigorously justifying convergence with singular interactions is challenging and remains an open question. Nonetheless, we contend that Thm.~\ref{thm:strong_conv} should still hold in practice for two reasons: (1) the Lennard-Jones (LJ) potential, which is employed in MD simulations to represent the finite size of particles, provides strong short-range repulsion, preventing particles from approaching the singularity; and (2) the substantial variance reduction achieved through the importance sampling technique mitigates stability issues. 
 Numerical results presented in the next section will further validate the effectiveness of the random batch method in capturing finite-time structure and dynamic properties, consistent with the conclusions of Thm.~\ref{thm:strong_conv}.

\section{Numerical results}

\label{sec:result}

In this section, we evaluate the accuracy and efficiency of the proposed QEM method using representative charged systems under confinement. 
These examples highlight characteristic multiscale phenomena, including dielectric boundary effects, anisotropic diffusion, and electrical double layer (EDL) structures under charge-asymmetric conditions. 
Simulations were performed on a single core of an AMD Ryzen Threadripper PRO 3995WX @ 2.2 GHz workstation. 
The software implementation~\cite{QuasiEwald} relies on Julia~\cite{Julia-2017} and \emph{CellListMap.jl}~\cite{celllistmap}, with visualizations generated via \emph{Makie.jl}~\cite{DanischKrumbiegel2021}.

\subsection{Convergence validation on static configurations}

We first validate the convergence of our method by calculating electrostatic energy and force fields for static particle configurations under quasi-2D confinement.
Figure~\ref{fig:E_xyz} compares the electrostatic fields in the $x$ and $z$ directions for a pair of oppositely charged particles under different dielectric confinements against reference values from the Image Charge Method (ICM).
The simulation box is $(100, 100, 50)$. 
A charge $q_1 = +1$ is fixed at $(50, 50, 1)$, while a second charge $q_2 = -1$ moves along either the $x$- or $z$-axis.
For QEM, parameters are $\alpha = 1.0$ and $n = 50$. 
For the ICM benchmark, we use Eq.~\eqref{eq:ICM} with $(200, 200)$ periodic images in $xy$ and $40$ reflections in $z$ (totaling $16,000$ image charges per source) to guarantee high accuracy.
The results in Fig.~\ref{fig:E_xyz} show excellent agreement between QEM and the ICM benchmarks.
Specifically, Fig.~\ref{fig:E_xyz}(a) reveals that confinement by conductor-like materials ($\gamma = -0.95$) weakens the Coulomb attraction between opposite charges, whereas insulator-like confinement ($\gamma = 0.95$) enhances it. 
Furthermore, Fig.~\ref{fig:E_xyz}(b) shows that the dielectric boundary exerts an attraction or repulsion on particles near the wall, arising from self-interaction with image charges.
These results confirm that QEM accurately captures dielectric boundary effects under doubly periodic conditions.

\begin{figure}[htbp!]
    \centering
    \includegraphics[width = \linewidth]{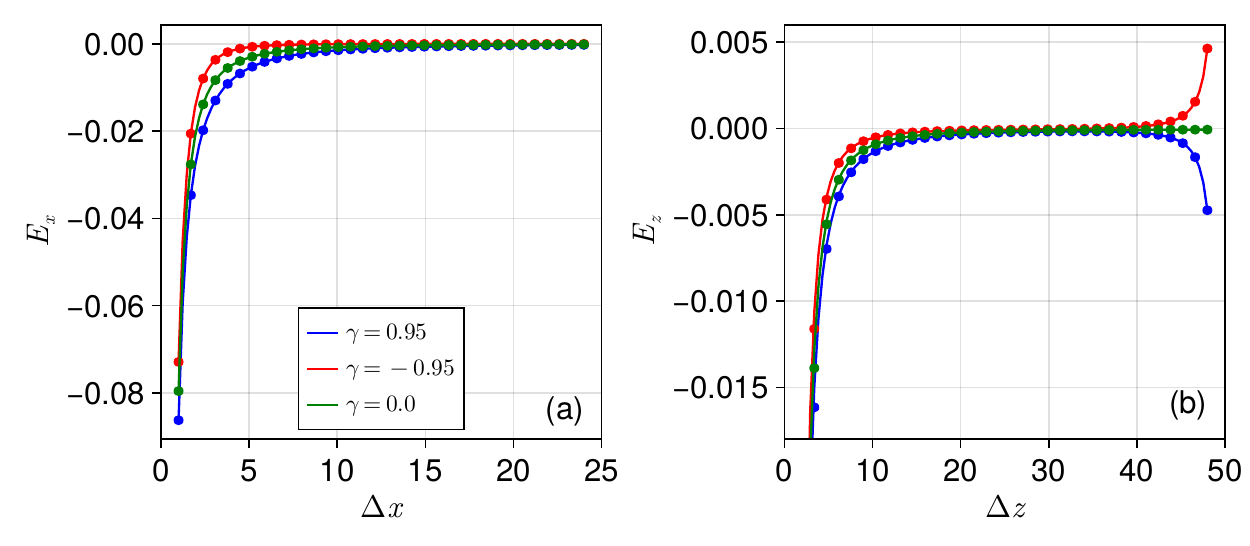}
    \caption{
        Electric fields in (a) $x$ and (b) $z$ between a pair of oppositely charged particles, which are confined in a box with size of $(100, 100, 50)$.
        The positive particle is fixed at $(50, 50, 1)$, while the negative charge is moving along (a) $(50 + \Delta x, 50, 1)$ and (b) $(50, 50, \Delta z)$, respectively.
        The solid lines are results obtained using QEM and  dots represent benchmark values by ICM with direct lattice sum of 16,000 image charges per particle.
    }
    \label{fig:E_xyz}
\end{figure}

We further validate QEM convergence by calculating the electrostatic interaction energy for a system of $100$ randomly distributed particles, comparing against the standard Ewald2D method combined with ICM.
Figure~\ref{fig:Error_E} plots the relative energy error as a function of the reflection factor~$\gamma$ and the error control parameter~$E$.
Here, $E$ is chosen according to Theorem~\ref{thm:truncation_error} such that the truncation errors satisfy $\Norm{\Delta \mathrm{I}_b(M)}\leq 10^{-E}$ and $\Norm{\Delta \mathrm{II}(M)}\leq 10^{-E}$.
The results confirm that the relative error decreases as $E$ increases, consistent with our theoretical analysis.

\begin{figure}[htbp!]
    \centering
    \includegraphics[width = 0.625\linewidth]{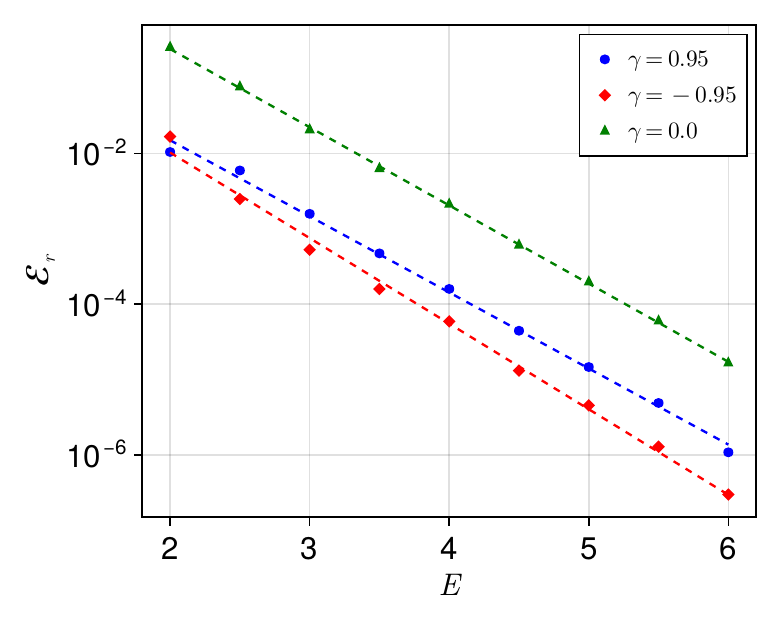}
    \caption{
        Relative error of the electrostatic interaction energy of $100$ charged particles randomly distributed in a box with size of~$(100, 100, 50)$, as a function of the error control parameter~$E$ obtained from our truncation error analysis Thm.~\ref{thm:truncation_error}. 
        The reflection factor~$\gamma$ is chosen as $\pm 0.95$ or $0$.
    }
    \label{fig:Error_E}
\end{figure}

\subsection{Applications in MD simulations}

To demonstrate the practical utility of QEM for MD simulations, we study prototypical ionic systems confined between two dielectric interfaces.
The simulation box has dimensions $(100 \tau_0, 100\tau_0, L_z)$ and contains 436 charged particles in a continuum solvent with Bjerrum length $\ell_{\mathrm B} = e_0^2/(4 \pi \eps_2 k_B T) = 3.5 \tau_0$.
Ions are modeled as soft spheres of diameter $\tau_0$ with excluded-volume interactions given by a purely repulsive shifted-truncated Lennard-Jones (LJ) potential with energy scale $\eps_{LJ} = k_B T$.
Confinement is enforced by purely repulsive shifted-truncated LJ walls ($\eps_\text{ion-wall} = k_B T$; $\tau_\text{ion-wall} = 0.5 \tau_0$) at $z = 0$ and $z = L_z$.
We use a time step of $0.001 t_0$, where $t_0 = \tau_0 m_0 / k_B T$ is the time unit (with ion mass $m_0 = 1$), and control temperature via a Nosé–Hoover thermostat.
QEM parameters are set to $E = 4$, $n = 40$, and batch size $P = 30$.
Systems are equilibrated for $2 \times 10^6$ steps, followed by a production run of $3 \times 10^6$ steps.
Sampling is performed every 100 steps, yielding $3 \times 10^4$ independent configurations for statistical analysis.

\subsubsection{MD simulations of symmetric systems}

We first investigate symmetric systems with identical top and bottom substrate permittivities $\gamma = 0.95$ or $-0.95$ and a channel height $L_z = 50 \tau_0$.
The system comprises 218 cations and 218 anions with symmetric charges $\pm e_0$.
To validate our QEM results, we compare the ion density profiles with benchmark data from the Harmonic Surface Mapping Algorithm (HSMA)~\cite{liang2020harmonic}.
Figure~\ref{fig:MD} shows excellent agreement between QEM and HSMA.
Consistent with the static analysis in Fig.~\ref{fig:E_xyz}(b), we observe ion accumulation near conductor-like interfaces and depletion near insulator-like interfaces.
We also compute the mean square displacements (MSD) in the $xy$ plane and $z$ direction.
As shown in Fig.~\ref{fig:msd}(a), $\mrm{MSD}_{xy}$ exhibits standard bulk-like behavior: a ballistic regime ($\sim t^2$) for $t < 1$ and a diffusive regime ($\sim t$) for $t > 1$.
In contrast, Fig.~\ref{fig:msd}(b) shows that $\mrm{MSD}_{z}$ saturates at large timescales.
This anisotropic diffusion is a direct consequence of confinement.

\begin{figure}[htbp!]
    \centering
    \includegraphics[width = 0.8 \linewidth]{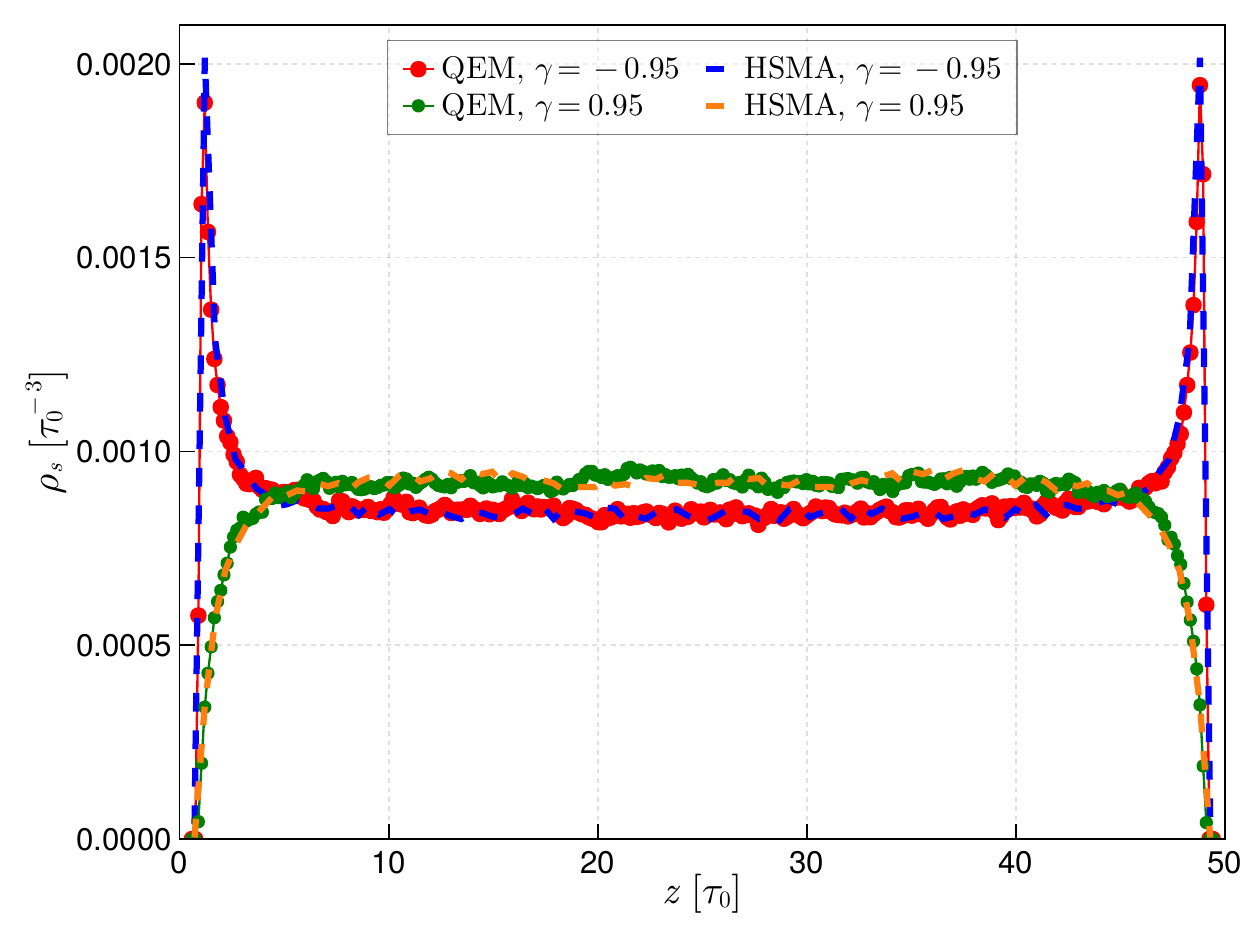}
    \caption{
        Equilibrium distributions of ion densities in~$z$ for symmetric electrolytes. The simulation box contains 218 cations and 218 anions, with dielectric confinements $\gamma_u = \gamma_d = \pm 0.95$ at~$z = 0$ and~$50\tau_0$, respectively. The solid lines indicate results obtained by QEM, and dashed lines are benchmark values using the HSMA method.
    }
    \label{fig:MD}
\end{figure}

\begin{figure}[htbp!]
    \centering
    \includegraphics[width = 1.0 \linewidth]{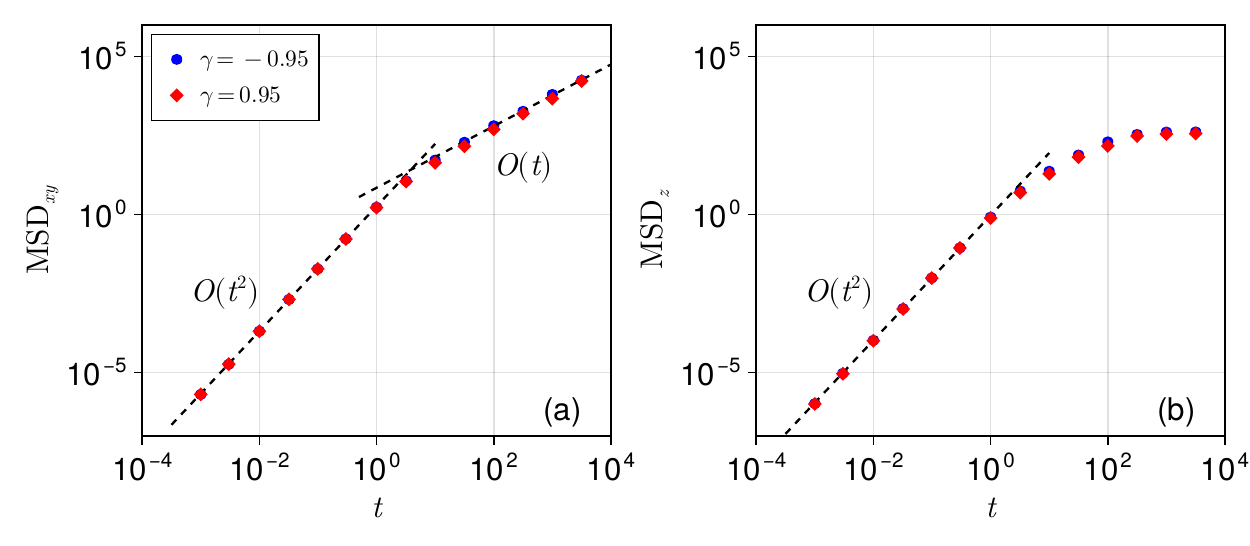}
    \caption{
        The mean square displacements (MSD) of the confined particles in (a)~$xy$ directions and (b)~$z$ direction, respectively, as a function of timescale $t$. Two reflection factor values, $\gamma= 0.95$ and $-0.95$, are considered.
    }
    \label{fig:msd}
\end{figure}

    The CPU time cost for QEM to evaluate the interactions between~$N$ confined particles is shown in Fig.~\ref{fig:timecost}, where the cost of the Ewald2D method is also plotted as a reference.
We fix the thickness of the system in $z$ to be~$50 \tau_0$, and vary the number of charged particles from~$10^2$ to~$10^5$ (but keep the ionic density to be fixed).
Since we choose $E=4$ in QEM, for a fair comparison, the Ewald2D parameters are also chosen to guarantee 4-digits accuracy.
The CPU time results are documented in Fig.~\ref{fig:timecost}, which indicate that the computational complexity of QEM is of~$\mathcal{O}(N)$, which allows the calculation for the interaction between more than $10^4$ particles within $1s$ on a single core.

\begin{figure}[htbp!]
    \centering
    \includegraphics[width = 0.625\linewidth]{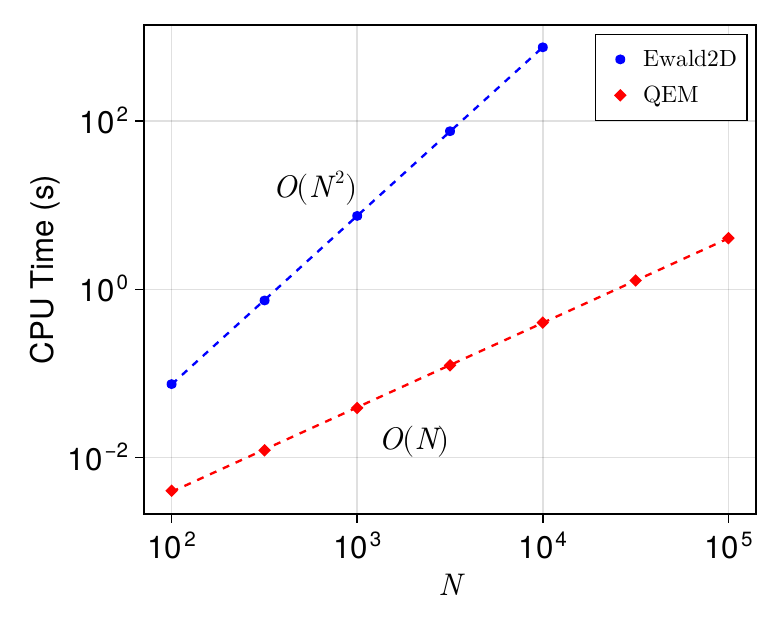}
    \caption{
    CPU time costs for computing the electrostatic interactions between $N$ particles.
    Blue and red dots indicate CPU time costs of Ewald2D method and QEM, respectively. Dashed lines are fitted scaling curves of $\mathcal O(N^2)$ and $\mathcal O(N)$.
    }
    \label{fig:timecost}
\end{figure}

\subsubsection{MD simulations of asymmetric systems}

Now we consider quasi-2D systems with asymmetry.
Specifically, two asymmetric scenarios will be considered here, namely, scenario I: asymmetry in the dielectric mismatches; and scenario II: charge asymmetry of cations and anions.
In both scenarios, the thickness of the system is set as~$L_z = 10 \tau_0$, so that~$L_{x/y} / L_z = 10$.

\textbf{Scenario I: effect of dielectric asymmetry.}
In this scenario, we investigate how asymmetric dielectric mismatches affect equilibrium ionic density distributions.
We set $\gamma_d = -0.95$ and $\gamma_u = 0.95$ for a system containing 218 cations ($+e_0$) and 218 anions ($-e_0$).
Figure~\ref{fig:non_sym} shows the resulting $z$-density profiles.
Both cations and anions are attracted to the conductive lower interface and repelled by the insulating upper interface, creating a strongly biased ionic distribution.

\begin{figure}[htbp!]
    \centering
    \includegraphics[width = 0.625\linewidth]{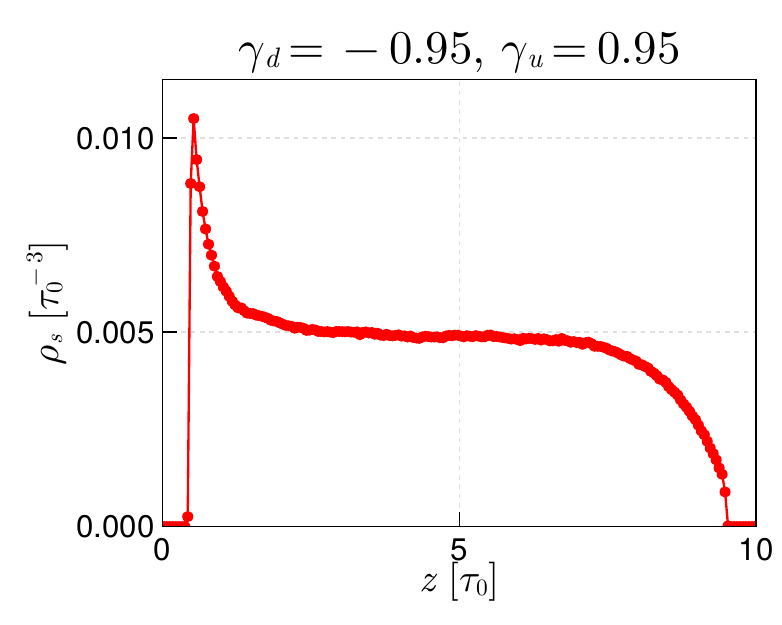}
    \caption{
        Ionic density in~$z$ for symmetric electrolytes containing 218 cations and 218 anions, which are confined by neutral dielectric interfaces with $\gamma_u = 0.95$,~$\gamma_d = -0.95$ at~$z = 0$ and~$10\tau_0$, respectively. 
        Due to charge symmetry of cations and anions, only the cation density is shown.
    }
    \label{fig:non_sym}
\end{figure}

\textbf{Scenario II: effect of charge asymmetry.}
Next, we examine the effect of charge asymmetry using a $3:1$ salt solution.
We set $\gamma_u = \gamma_d$ to be $0.95$ or $-0.95$ and simulate a system with 109 trivalent cations and 327 monovalent anions (maintaining 436 total particles).
For conductor-like interfaces ($\gamma = -0.95$, Fig.~\ref{fig:salt3-1}(a)), both species are attracted to the walls. However, the multivalent cations experience a stronger attraction, leading to a higher concentration near the interface.
This accumulation effectively renders the neutral interface positively charged from a far-field perspective—a manifestation of ``charge renormalization''~\cite{trizac2002simple}.
Consequently, a secondary peak in the anion density appears at approximately $1.5\tau_0$, indicating charge stratification. This effect is known to be amplified near strongly charged surfaces or with larger ions~\cite{li2013ionic}.
Conversely, for insulator-like interfaces ($\gamma = 0.95$, Fig.~\ref{fig:salt3-1}(b)), both ion types are repelled from the walls and concentrate in the channel center.
Due to stronger repulsion forces, the multivalent cations are confined more strongly to the center ($z=0$), creating an effectively positive central layer surrounded by a diffuse cloud of anions.
This results in the formation of an electrical double layer (EDL) at the channel center, driven by the interplay between dielectric boundary effects and many-body electrostatic interactions.

\begin{figure}[htbp!]
  \centering
  \includegraphics[width = \linewidth]{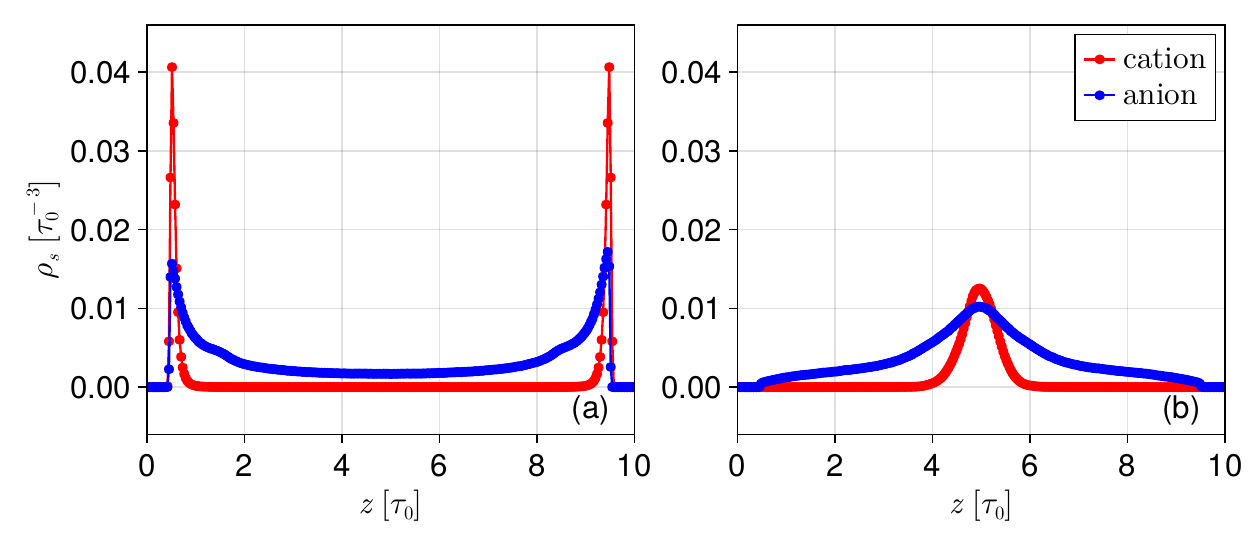}

  \caption{
    Cation and anion densities in~$z$ for a $3:1$ salt with $109$ trivalent cations and $327$ monovalent anions confined by neutral dielectric substrates with both $\gamma_u$ and $\gamma_d$ set to (a) $-0.95$ and (b) $0.95$.
    Here
    red and blue lines indicate densities for cations and anions, respectively.
  }
    \label{fig:salt3-1}
\end{figure}


\section{Conclusion}
\label{sec:concl}

In summary, we have developed the quasi-Ewald method (QEM) for fast computation of quasi-2D nanoconfined electrostatics. 
Our method is validated to be efficient and accurate, and can be used for investigating various dielectric confinement effects in quasi-2D charged many-body systems. In the future, we will consider extending the QEM to quasi-2D systems confined by metamaterials, where the dielectric contrast $\gamma$ can be extended onto the whole complex plane (except countable infinite poles)~\cite{yu2018plasmonic}. 
Notably, ICM will diverge in such scenarios, but we expect QEM would still work since it does not rely on ICM.
Furthermore, we plan to apply the QEM to real applications, such as simulation studies of bio-membranes and battery-electrolyte systems, where confinement effects
can significantly influence their properties. We finally note that the variance bound proven in Thm.~\ref{thm:unbaised} is based on the DH theory, which becomes inaccurate for strongly-correlated ionic systems~\cite{fisher1993criticality,avni2022conductivity}; obtaining a tight variance bound for such challenging scenarios is reserved for our future investigation.



\appendix

\section{Proof of Theorem 3.3}

   As we apply quasi-2D Fourier transform  
   on both sides of the Poisson's euqation with dielectric interface conditions:
\begin{equation*} 
    \left \{
    \begin{array}{ll}
        - \grad_{\V{r}} \cdot\left[ \epsilon(\V{r}) \grad_{\V{r}} G(\V{r};\V{r}^\prime) \right] = \delta (\V{r} - \V{r}^\prime), & \text{for}~\V r \in \mathbb{R}^3 \;, \\
        G(\V{r};\V{r}^\prime) |_{-} = G(\V{r};\V{r}^\prime) |_{+}, & \text{on}~\partial\Omega_{\mrm{c}}\;, \\
        \epsilon_{\mrm{c}} \partial_{z} G(\V{r};\V{r}^\prime) |_{-} = \epsilon_{\mrm{u}} \partial_{z} G(\V{r};\V{r}^\prime) |_{+}, & \text{on}~\partial \Omega_{\mrm{c}} \cap \partial \Omega_{\mrm{u}}\;,     \\
\epsilon_{\mrm{c}} \partial_{z} G(\V{r};\V{r}^\prime) |_{-} = \epsilon_{\mrm{d}} \partial_{z} G(\V{r};\V{r}^\prime) |_{+}, & \text{on}~\partial \Omega_{\mrm{c}} \cap \partial \Omega_{\mrm{d}}\;,\\
G(\V{r};\V{r}^\prime) \to 0, & \text{as}~{r} \to \infty\;,
\end{array}\right.
\end{equation*}
 then for $\vk\neq \vzero$, we have
    \begin{equation*} 
        \begin{split}
            \frac{\partial^2 \Hat{G}_{\mrm{c}}(\vk, z; z_0)}{{\partial z}^2} - k^2 \Hat{G}_{\mrm{c}}(\vk, z; z_0) &= - \frac{ e^{-\mathrm{i} \V{k} \cdot \V{\rho}_0}\delta (z- z_0)}{\epsilon_{\mrm{c}}},~z \in [0, L_z],\\
            \frac{\partial^2 \hat{G}_{\mrm{u}}(\vk, z; z_0)}{{\partial z}^2} - k^2 \hat{G}_{\mrm{u}}(\vk, z; z_0) &= 0,~z > L_z,\\
            \frac{\partial^2 \hat{G}_{\mrm{d}}(\vk, z; z_0)}{{\partial z}^2} - k^2 \hat{G}_{\mrm{d}}(\vk, z; z_0) &= 0,~z < 0,
        \end{split}
    \end{equation*}
    with the boundary and interface conditions satisfying
    \begin{equation*}
        \begin{split}
            \Hat{G}_{\mrm{c}}(\vk, 0; z_0) &= \Hat{G}_{\mrm{d}}(\vk, 0; z_0)\;,\\
            \hat{G}_{\mrm{c}}(\vk, L_z; z_0) &= \hat{G}_{\mrm{u}}(\vk, L_z; z_0)\;,\\
            \epsilon_{\mrm{c}} \partial_z\Hat{G}_{\mrm{c}}(\vk, 0; z_0) &= \epsilon_{\mrm{d}} \partial_z\Hat{G}_{\mrm{d}}(\vk, 0; z_0)\;,\\
            \epsilon_{\mrm{c}} \partial_z\Hat{G}_{\mrm{c}}(\vk, L_z; z_0) &= \epsilon_{\mrm{u}} \partial_z\Hat{G}_{\mrm{u}}(\vk, L_z; z_0)\;,\\
            \lim_{z \to \infty} \Hat{G}_{\mrm{u}}(\vk, z; z_0) &= \lim_{z \to -\infty}\Hat{G}_{\mrm{d}}(\vk, z; z_0)  = 0\;.
        \end{split}
    \end{equation*}
Clearly, $\Hat{G}_{\mrm{u}}(\vk, z; z_0)$ and $\Hat{G}_{\mrm{d}}(\vk, z; z_0)$ should take the following forms to satisfy the infinite boundary conditions 
\begin{align*}
    \Hat{G}_{\mrm{u}}(\vk, z; z_0)&=C_{\mrm{u}}(z_0)e^{-kz},~z > L_z\;,\\
     \Hat{G}_{\mrm{d}}(\vk, z; z_0)&=C_{\mrm{d}}(z_0)e^{kz},~~~~z < 0\;.
\end{align*}
Therefore, we have 
\begin{align*}
    \partial_z\Hat{G}_{\mrm{u}}(\vk, z; z_0)&=-k\Hat{G}_{\mrm{u}}(\vk, z; z_0),~z > L_z\;,\\
     \partial_z\Hat{G}_{\mrm{d}}(\vk, z; z_0)&=k\Hat{G}_{\mrm{d}}(\vk, z; z_0),~z < 0\;.
\end{align*}
Then via Dirichlet-to-Neumann map, we obtain a new set of boundary value problems for $\Hat{G}_{\mrm{c}}(\vk, z; z_0)$ defined on the bounded domain $[0, L_z]$ only: 
\begin{align}\label{eq:dtnbc1}
 \epsilon_{\mrm{c}}\partial_z\Hat{G}_{\mrm{c}}(\vk, 0; z_0)&=\epsilon_{\mrm{d}} \partial_z\Hat{G}_{\mrm{d}}(\vk, 0; z_0)=k\epsilon_{\mrm{d}} \Hat{G}_{\mrm{d}}(\vk, 0; z_0)=k\epsilon_{\mrm{d}} \Hat{G}_{\mrm{c}}(\vk, 0; z_0)\;,\\
  \epsilon_{\mrm{c}}\partial_z\Hat{G}_{\mrm{c}}(\vk, L_z; z_0)&=\epsilon_{\mrm{u}} \partial_z\Hat{G}_{\mrm{u}}(\vk, L_z; z_0)=-k\epsilon_{\mrm{u}} \Hat{G}_{\mrm{u}}(\vk, L_z; z_0)=-k\epsilon_{\mrm{u}} \Hat{G}_{\mrm{c}}(\vk, L_z; z_0)\;.\label{eq:dtnbc2}
\end{align} 
Finally, since $\Hat{G}_{\mrm{c}}(\vk, z; z_0)$ takes the form
\begin{equation*}
    \Hat{G}_{\mrm{c}}(\vk, z; z_0)=\left \{
            \begin{array}{cc}
                \begin{aligned}
                    & Ae^{kz}+Be^{-kz}\;,
                \end{aligned} & z_0<z<L_z, \\
                \begin{aligned}
                 & Ee^{kz}+Fe^{-kz}\;,
                \end{aligned} & 0<z<z_0,
            \end{array}
        \right. \;
\end{equation*}
by the continuity and jump condition of $\Hat{G}_{\mrm{c}}$ and $\partial_z\Hat{G}_{\mrm{c}}$ at $z=z_0$, respectively; as well as the new BCs at $z=0$ and $z=L_z$ (Eqs.~\eqref{eq:dtnbc1} and~\eqref{eq:dtnbc2}),
one can solve for the unknown coefficients $A$, $B$, $E$, $F$.
 Analogously, one can solve for the case $\vk=\vzero$ and  we omit this for brevity.

\section{Proof of Theorem 4.1}
For $\Delta \mathrm{I}_b(M)$,  as an upper bound on $\Norm{\fJ_0(r)}$ yields~\cite{landau2000bessel,olenko2006upper},
\begin{equation*}
   \Norm{\fJ_0(r)}\leq \sqrt{\frac{2}{\pi}} \frac{1}{\sqrt{r}}\;,~\forall r>0\;, 
\end{equation*}
thus we have 
\begin{align*}
   & \Norm{\int_M^{+\infty}  e^{-2kL_z}e^{-ak} \fJ_0(\rho k)\D k} 
\leq \int_M^{+\infty}  e^{-2kL_z}\sqrt{\frac{2}{\pi}} \frac{1}{\sqrt{\rho k}} \D k=\frac{1}{\sqrt{\rho L_z}}\mrm{erfc}(\sqrt{2L_zM})\;.
\end{align*}
And for $\Delta \mathrm{II}(M)$, with the property that $\Norm{\fJ_0(r)}\leq 1$, 
\begin{align*}
 \Norm{ \int_M^{+\infty}  e^{-\frac{k^2}{4\alpha}}  e^{-ak} \fJ_0(\rho k)\D k}  
  \leq &\int_M^{+\infty} e^{-\frac{k^2}{4\alpha}} \D k=\sqrt{\pi\alpha} \mrm{erfc}\left(\frac{M}{2\sqrt{\alpha}}\right)\;,
\end{align*}
and we finish our proof.

\section{Proof of Lemma 5.2}
Under the DH approximation, one is able to estimate functions associated with the $i$-th particle in the form:
\begin{equation}
	\fG(\bm{r}_i)=\sum_{j\neq i}q_{j}e^{\mathrm{i} \bm{k}\cdot\bm{\rho}_{ij}}f(z_{ij}),
\end{equation}
where $|f(z_{ij})|$ is bounded by a constant $C_f$ independent of $z_{ij}$. The DH theory considers the simplest model of an electrolyte solution confined to the simulation cell, where all $N$ ions are idealized as hard spheres of diameter $r_{a}$ carrying charge $\pm q$ at their centers. The charge neutrality condition requires that $N_+=N_-=N/2$. Let us fix one ion of charge $+q$ at the origin $r=0$ and consider the distribution of the other ions around it.

In the region $0<r\leq r_{a}$, the electrostatic potential $\phi(\bm{r})$ satisfies the Laplace equation $-\Delta\phi(\bm{r})=0$. For $r\geq r_{a}$, the charge density of each species is described by the Boltzmann distribution $\rho_{\pm}(\bm{r})=\pm qe^{\mp\beta q\phi(\bm{r})}\rho_r/2$ with number density $\rho_r=N/V$. In this region, the electrostatic potential satisfies the linearized Poisson-Boltzmann equation~\cite{levin2002electrostatic}:
\begin{equation}
	-\Delta \phi(\bm{r})=2\pi\left[q \rho_r e^{-\beta q\phi(\bm{r})}-q\rho_r e^{+\beta q\phi(\bm{r})}\right]\approx -4\pi \beta q^2\rho_r\phi(\bm{r}),
\end{equation}
and its solution is given by
\begin{equation}
	\phi(\bm{r})=\begin{cases}
		\dfrac{q}{4\pi r}-\dfrac{q\kappa}{4\pi (1+\kappa a)},& r<r_{a},\\[1em]
		\dfrac{qe^{\kappa a}e^{-\kappa r}}{4\pi r(1+\kappa a)},&r\geq r_{a},
	\end{cases}
\end{equation}
where $\kappa=\sqrt{\beta q^2\rho}$ denotes the inverse of Debye length $\lambda_{\text{D}}$. By this definition, the net charge density for $r>r_{a}$ is $\rho_>(\bm{r})=-\kappa^2\phi(\bm{r})$. Let us fix $\bm{r}_i$ at the origin. Given these considerations, for $r\geq r_a$, one obtains the following estimate:
\begin{equation}\label{eq::E.4}
	\begin{aligned}
		|\fG(\bm{r}_i)|&\approx \left|\int_{\mathbb{R}^3\backslash B(\bm{r}_{i}; r_a)}\rho_>(\bm{r})e^{-\mathrm{i} \bm{k}\cdot\bm{\rho}}f(z)d\bm{r}\right|\\
		&\leq \frac{q_iC_fe^{\kappa a}}{4\pi(1+\kappa a)}\int_{a}^{\infty}\frac{e^{-\kappa r}}{r}4\pi r^2dr\\
		&=q_iC_f\lambda_{\text{D}}^2\;,
	\end{aligned}
\end{equation}
where $ B(\bm{r}_{i}; r_a)$ stands for the ball of radius $r_a$ centered at $\bm{r}_{i}.$

It is noteworthy that upper bound Eq.~\eqref{eq::E.4} is derived under the continuum approximation. In the presence of surface charges, the charge distribution along the $z$-direction may lack spatial uniformity. However, due to the confinement of particle distribution between two parallel plates, the integral in Eq.~\eqref{eq::E.4} along the $z$-direction remains bounded. An upper bound in the form of $|\fG(\bm{r}_i)|\leq C_s C_{f}q_i$ can still be expected, where $C_s$ is a constant related to the thermodynamic properties of the system. The above proof can also be found in Appendix F of Ref.~\cite{gan2025fast}, which is some of the authors' earlier work for quasi-2D Coulomb systems but without considering the sharp dielectric interfaces.

\bibliographystyle{siamplain}
\bibliography{ref}
\end{document}